\DeclareMathOperator{\interior}{int}
\crefname{figure}{Fig.}{Figs.}
\crefname{table}{Table}{Tables}
\crefname{section}{Sec.}{Secs.}
\crefname{equation}{Eq.}{Eqs.}
\crefname{appendix}{Appendix}{Appendices}
\newcommand{\Sc}{\mathrm{Sc}}
\newtheorem{theorem}{Theorem}[section] 
\newtheorem{lemma}[theorem]{Lemma} 
\journal{a journal}
\begin{document}

\begin{frontmatter}



\title{Stochastic Suspended Sediment Dynamics in Semi-Bounded Open Channel Flows: A Reflected SDE Approach}


\author[inst1]{Manotosh Kumbhakar}

\author[inst1]{Christina W. Tsai\corref{cor1}}
\ead{cwstsai@ntu.edu.tw}
\cortext[cor1]{Corresponding author}

\affiliation[inst1]{organization={Department Of Civil Engineering},
	addressline={National Taiwan University}, 
	city={Taipei},
	postcode={10617}, 
	country={TAIWAN}}

\begin{abstract}
Stochastic processes, in the form of stochastic differential equations (SDEs), integrate stochastic elements to account for the inherent randomness in sediment particle trajectories in an open-channel turbulent flow. Accordingly, a stochastic diffusion particle tracking model (SDPTM) has been proposed in the literature to analyze suspended sediment dynamics. In this work, we develop a reflected stochastic diffusion particle tracking model (RSDPTM) for suspended sediment motion in a two-dimensional open channel flow based on reflected SDE, which is a mathematically consistent theory for stochastic processes in a bounded region. The Eulerian model given in terms of the Fokker-Planck equation (FPE) is also proposed by formulating boundary conditions for the confined domain. The existence and uniqueness of the solution to the proposed reflected SDE are proven, and the strong order of convergence of the projected Euler-Maruyama (EM) method is discussed. In order to correctly incorporate the physical mechanism of sediment-laden open-channel flow, an improved algorithm considering the threshold criteria of sediment suspension is proposed. The ensemble means, variances, and MSDs in both streamwise and vertical directions are discussed. It is observed that the particle motion in both directions follows anomalous diffusion, which is the deviation from normal or Fickian diffusion theory. Finally, the proposed model is validated through the suspended sediment concentration (SSC) distribution by comparing it with relevant experiential data, and the comparison shows an excellent agreement between the estimated and measured values of SSC. In summary, the proposed RSDPTM and improved algorithm may enhance our idea about the inherent randomness of suspended sediment motion in an open channel turbulent flow.   
\end{abstract}

%

\begin{keyword}
	Sediment transport \sep Turbulent flow \sep Reflected SDE \sep Strong error \sep Stochastic model.
	\MSC[2020] 60H30  \sep 60J70 \sep 76M35 \sep 86A05
\end{keyword}

\end{frontmatter}


\section{Introduction}
\label{sec:intro}
Sediment transport has been a subject of continuing investigation, gripping attention for over a century due to its fundamental significance in reservoir sedimentation control and management, river training networks, canal operation, pollutant transportation, etc. In general, sediment transport in open channel flow occurs in two distinct modes: bed load and suspended load. Bed load refers to the portion traveling close to the bed, while suspended load involves the part traveling in suspension above the bed \citep{dey2014fluvial}. These two modes of transport are separated by a thin fictitious line known as the bed-load layer thickness. Also, there are certain contributing factors to the threshold of bed load and suspended load mechanisms at which they start to occur. We focus on the suspension region of the flow field. 
\par 
The movement of suspended sediment particles in turbulent open channel flow can be conceptualized using Eulerian and Lagrangian methodologies. The Eulerian approach involves capturing the collective motion of particles by defining a control volume within the flow field and applying the mass conservation law to understand the temporal and spatial variations in volumetric suspended sediment concentration \citep{ancey2015stochastic,dey2014fluvial}. In contrast, the Lagrangian framework tracks individual particles within the flow field \citep{alsina2009measurements}. One stochastic variation of this approach utilizes stochastic differential equations (SDEs) driven by Brownian motion \citep{dimou1993random}. Brownian motion captures the erratic movement of sediment particles in turbulent flow and is mathematically expressed in SDEs through a random term representing the Wiener process. \cite{man2007stochastic} explored the dynamics of suspended sediment by formulating a model based on SDEs driven by this process, which is termed as the stochastic diffusion particle tracking model (SDPTM).  Later, this idea was extended in successfully analyzing various sediment-laden open channel flow scenarios, e.g., particle movements under extreme flow conditions \citep{oh2010stochastic,tsai2016incorporating}, interaction between flow and particles using a multivariate approach \citep{oh2018stochastic}, modeling of sediment motion under the effect of turbulent bursting phenomenon \citep{tsai2019modeling}, two-particle SDPTM incorporating particle correlation \citep{tsai2020stochastic}, incorporation of the memory effect of turbulent structures into the suspended sediment movements \cite{tsai2021incorporating},  identification of probable sedimentation sources by proposing a backward-forward SDPTM \citep{liu2021development}, characterization of sweep and ejection events and its implication to sediment transport \citep{wu2022probabilistic}, analysis of the influence of attached eddies in sediment transport \citep{huang2023modeling}, etc. 
\par 
The Wiener process exhibits specific characteristics, including a normally distributed Wiener increment with variance proportional to the time increment. This property, combined with increment stationarity, results in unbounded particle movement for larger time intervals. Consequently, the numerical simulation of stochastic differential equations (SDEs) driven by the Wiener process, such as the SDPTM, may yield impractical values like negatives or those beyond a specified domain. In the context of two-dimensional open-channel turbulent flow, where the domain is fully bounded vertically and semi-bounded in the streamwise direction, SDPTM modeling encounters challenges related to negative and undesirable values. Modifications are often implemented in numerical simulations to constrain values within the confined domain, including alternative methods like resampling the Wiener increment. However, these approaches may introduce biases, leading to inaccuracies in modeling treatments. To address these challenges, the mathematical community has long addressed these issues by refining SDEs using a concept called `local time,' resulting in the formulation of the reflected SDE (RSDE) \citep{singer2008partially}. In a general sense, local time refers to a stochastic process linked to another process, such as the Wiener process, quantifying the time a particle spends at a given boundary. RSDEs are also referred to as the Skorokhod problem, which initially established the existence and uniqueness of strong solutions to such SDEs \citep{skorokhod1961stochastic,skorokhod1962stochastic2}. Numerical solutions to SDEs require modifications of the deterministic methods due to the random term involved in the equation \citep{higham2001algorithmic}. Further, the numerical approaches for handling RSDEs are more challenging, requiring case-specific solutions. 
\par 
The SDPTM and related works mentioned above addressed the problem of bounded domain heuristically, needing a mathematically consistent formulation. Also, a detailed theoretical and numerical analysis for stochastic sediment transport modeling is yet to be carried out. Further, when sediment particles reach the bed, they are subject to the resuspension mechanism. Therefore, a robust algorithm for simulating particle trajectories may be needed to have a physically realistic model. Therefore, looking into the research gaps, the objectives of our work are to: 
\begin{enumerate}[label=\Roman*.]
	\item Formulate mathematically consistent Eulerian and Lagrangian models for suspended sediment dynamics in a semi-bounded open-channel turbulent flow. 
	\item Carry out the mathematical analysis of the proposed model, specifically, the existence and uniqueness of the solution to RSDE, analysis of the numerical method, and its order of convergence. 
	\item Propose an improved algorithm for particle trajectories incorporating the threshold of the sediment suspension mechanism.  
	\item Validate the proposed model with experimental data to check the efficiency of the modified algorithm. 
\end{enumerate}
These objectives are accomplished step by step in this study. In \cref{sec:math_model}, first, the existing SDPTM is discussed briefly. Then, both the Eulerian (FPE) and Lagrangian (RSDPTM) models incorporating the boundary effects are proposed, and the selection of hydraulic variables and parameters is discussed. In \cref{sec:existence}, the existence and uniqueness of the solution to general RSDES are discussed and then extended for the proposed RSDPTM. Numerical solution to the RSDPTM is explained in \cref{sec:numerical}. Next, in \cref{sec:result}, first, we consider some numerical tests to estimate the strong order of convergence of the projected EM. Then, the ensemble means, variances and MSDs of particle trajectories are discussed by proposing an improved algorithm. Also, the model is validated through the experimental data of sediment concentration distribution. Finally, conclusions and some possible future scopes are given in \cref{sec:conclusions}.      
\section{Mathematical Modeling}
\label{sec:math_model}
\subsection{Stochastic Diffusion Particle Tracking Model (SDPTM)}
\label{subsec:sdptm}
Let us consider a three-dimensional sediment-laden turbulence flow field. From a Lagrangian perspective, we model the sediment dynamics by looking into the trajectory of an individual particle. The Langevin equation can model the movement of particles in turbulent flow, which is a particular example of stochastic differential equations (SDEs) that combine the effect of deterministic and stochastic forces. \cite{man2007stochastic} developed the stochastic diffusion particle tracking model (SDPTM) for analyzing the motion of sediment particles in open channel turbulent flow. The generalized form of the equation representing the particle's location is given as follows:
\begin{equation}\label{eq1}
	d\boldsymbol{\Phi}_{t} = \bar{\boldsymbol{u}}(t,\boldsymbol{\Phi}_{t}) dt + \boldsymbol{\sigma} (t,\boldsymbol{\Phi}_{t}) d\boldsymbol{B}_{t}
\end{equation}
subject to an initial condition
\begin{equation}\label{eq2}
	\boldsymbol{\Phi}_{0} = \boldsymbol{\phi}_{0}
\end{equation}
where the stochastic process vector $\left\{\boldsymbol{\Phi}_{t}, t \in [0,\infty)\right\}$ is defined on a common probability space $\left(\mathbb{R},\mathcal{B}\left(\mathbb{R}\right),P\right)$ and a measurable space $\left(\mathbb{R},\mathcal{B}\left(\mathbb{R}\right)\right)$, in which $P$ denotes the probability measure. The initial values $\boldsymbol{\phi}_{0}$ are constants (non-random). Here, $\boldsymbol{\Phi}_{t} : \mathbb{R}^{3} \times [0,\infty) \to \mathbb{R}^{3}$ denotes the particle position vector, defined as
\begin{equation}\label{eq3}
	\boldsymbol{\Phi}_{t} = \begin{bmatrix}
		X_{t}\\
		Y_{t}\\
	    Z_{t}
	\end{bmatrix},
\end{equation} 
in the streamwise, transverse, and vertical directions, respectively. The term $\boldsymbol{B}_{t}$ represents the independent Brownian motion vector. In the context of sediment-laden turbulence, the mean drift velocity term $\bar{\boldsymbol{u}}(t,\boldsymbol{\Phi}_{t}):  \mathbb{R}^{3} \times [0,\infty) \to \mathbb{R}^{3}$ can be expressed in terms of mean velocity and diffusivity gradient, as follows:
\begin{equation}\label{eq4}
\bar{\boldsymbol{u}}(t,\boldsymbol{\Phi}_{t}) = {\setstretch{1.4}\begin{bmatrix}
		\bar{u}+\frac{\partial D_{x}}{\partial x}\\
		\bar{v}+\frac{\partial D_{y}}{\partial y}\\
		\bar{w}-w_{s}+\frac{\partial D_{z}}{\partial z}
	\end{bmatrix}},
\end{equation} 
where $\bar{u}$, $\bar{v}$, and $\bar{w}$ are mean fluid velocities in three directions; $D_{x}$, $D_{y}$, and $D_{z}$ are the sediment diffusivities along three directions; and $w_{s}$ is the sediment settling velocity (or, terminal fall velocity) that acts in the downward direction due to gravity. The diffusion coefficient tensor $\boldsymbol{\sigma} (t,\boldsymbol{\Phi}_{t}) : \mathbb{R}^{3} \times [0,\infty) \to \mathbb{R}^{3 \times 3}$ takes on the form:
\begin{equation}\label{eq5}
	\boldsymbol{\sigma} (t,\boldsymbol{\Phi}_{t}) = 
	\begin{bmatrix}
	    \sigma_{11} & \sigma_{12} & \sigma_{13}\\
	 	\sigma_{21} & \sigma_{22} & \sigma_{23}\\
		\sigma_{31} & \sigma_{32} & \sigma_{33}
	\end{bmatrix}
\end{equation} 
If the coordinate axes are aligned with the flow, then $\boldsymbol{\sigma} (t,\boldsymbol{\Phi}_{t})$ becomes a diagonal matrix, i.e., $\sigma_{ij}=0$ for $i \neq j$. The relationship between the diffusion coefficient $\boldsymbol{\sigma}$ and the sediment diffusivity $\boldsymbol{D}$ tensor can be given as \citep{man2007stochastic}:
\begin{equation}\label{eq6}
D_{ii} = \frac{1}{2} \left[\boldsymbol{\sigma \sigma^{T}}\right]_{i,i}
\end{equation} 
For isotropic turbulent flow, $D_{ii} = D_{x}, D_{y}$, and $D_{z}$ for $i=1,2$, and 3, respectively. Given the flow parameters and variables, such as $\boldsymbol{\bar{u}}$, $\boldsymbol{\sigma}$, and $\boldsymbol{\phi}_{0}$, one can simulate the governing SDE \cref{eq1} numerically. One such approach is the Euler-Maruyama scheme, which approximates the solution as follows:
\begin{equation}\label{eq7}
\boldsymbol{\Phi}_{t+\Delta t} =\boldsymbol{\Phi}_{t}+ \bar{\boldsymbol{u}}(t,\boldsymbol{\Phi}_{t}) \Delta t + \boldsymbol{\sigma} (t,\boldsymbol{\Phi}_{t}) \Delta \boldsymbol{B}_{t}
\end{equation} 
where $\Delta t$ is the time step. 
\par
The transport of sediments in open channel flow occurs mainly through two different modes, namely bed load and suspended load. In bed load transport, relatively heavier particles resting on the bed start rolling, sliding, or saltating (succession of small jumps) in the near-bed region of the flow. On the other hand, fine particles come into suspension in the main flow region due to turbulence, known as the suspended load. These two different phenomena are distinguished by a (hypothetical) thin line called the bed-load layer thickness \citep{dey2014fluvial}. Therefore, they are confined to a bounded region of the flow domain. \cite{man2007stochastic} developed SDPTM for modeling suspended sediment transport in open-channel turbulent flow. Later, their works were extended by incorporating several turbulent mechanisms as well as refined stochastic processes. However, none of the works considered a mathematically consistent formulation of a stochastic process in a bounded domain; instead, they handled the boundary by discarding the non-physical values. This is an important aspect while dealing with SDEs subject to boundary conditions, which are discussed below in detail. 
\par 
Due to the stationary increment of Brownian motion, the larger the interval, the larger the fluctuations on this interval. It means that the variance is proportional to time. This creates difficulty when modeling a physical phenomenon in a confined domain as the values produced by \cref{eq7} exceed the boundary of the domain and can become negative. Further, the square root term in the diffusion coefficient produces imaginary values in the numerical simulation of \cref{eq1}. These issues have been addressed using several ways in the literature \citep{fox1997stochastic,goldwyn2011stochastic,dangerfield2010stochastic,dangerfield2012boundary,lord2010comparison}. For example, the numerical simulation can be adjusted to have the positive solution or values within the bounded domain. Also, the stochastic term (Brownian increment) can be resampled to get the physical result. However, these approaches can still result in negative values or bias the outcome, as discussed in \cite{dangerfield2012modeling}. In the context of SDPTM, researchers have tackled this issue by forcing the numerical solution to be within the flow domain. 
\par 
The aforementioned issues were addressed long back in the mathematical formulation of SDEs subject to a bounded region \citep{skorokhod1961stochastic,skorokhod1962stochastic2}. The resulting equation is known as the reflected stochastic differential equation (RSDE). The concept of RSDEs has been applied successfully in many areas, such as to model the constrained animal movement \citep{brillinger2003simulating}, human metabolic process \citep{kawamura2006stochastic}, biochemical reaction kinetics \citep{niu2016modelling}, ion channel dynamics \citep{dangerfield2012modeling}, etc. In this work, we explore the RSDE for modeling suspended sediment movement in open channel turbulent flow. Both the Eulerian and Lagrangian approaches for modeling sediment transport subject to the boundary are discussed below in detail. 
\subsection{Reflected Stochastic Diffusion Particle Tracking Model (RSDPTM)}
\label{subsec:rsdptm}
We consider a two-dimensional open channel turbulent flow with uniform flow depth carrying sediment particles. For any streamwise distance ($0 \leq x < \infty$), the sediment particles come into suspension after a certain height, known as the reference level (say, $z=a$), and can make movements until the water surface ($z=h$). Therefore, $x \in [0,\infty)=\mathbb{R}^{+}$ and $z \in [a,h]=D_{1}$ (say), which implies the 2D domain is, say $D=\mathbb{R}^{+} \times D_{1}$. A schematic diagram is presented in \cref{fig_1}. The diagram in \cref{fig_3} illustrates both the wall effects and a comparison between standard Brownian motion without a wall and reflected Brownian motion near a wall. When we refer to a reflected Brownian-motion process, we are specifically describing a Brownian-motion particle that is reflected by a wall, with the reflection being momentary and involving energy loss, akin to a mirror reflection. While walls with partial-reflecting/partial-absorbing, totally absorbing, or delay-reflecting characteristics are intriguing and warrant investigation in the future, they are not within the scope of this study. In standard Brownian motion, a particle exhibits a mean displacement of 0 and a diffusion radius proportional to $dt$, where $dt$ represents the time increment. In contrast, the motion of a particle near a wall entails a mean displacement bias opposite to the wall, along with a suppressed diffusion. Next, based on the boundary, we formulate both Lagrangian and Eulerian equations for the suspended sediment dynamics. 

\begin{figure}[hbt!]
	\centering
	\includegraphics[height=4cm,width=10cm]{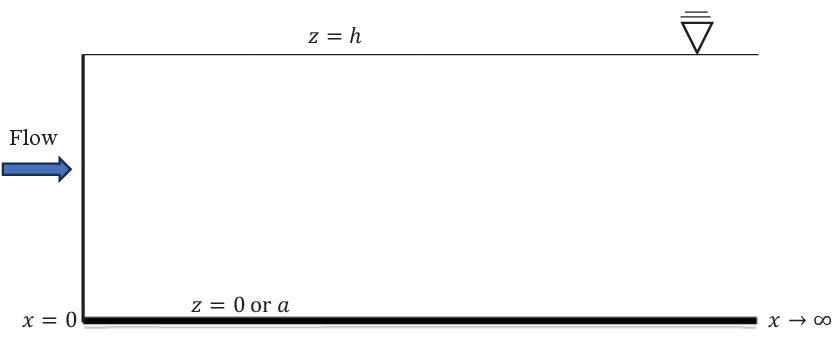}
	\caption{Schematic diagram of the flow domain.}
	\label{fig_1}
\end{figure}


\subsubsection{Eulerian Framework}
\label{subsub:euler}
In Eulerian approach, we focus on the collective behaviour of the particles rather than the individual one. To that end, here, we derive the Fokker-Planck equation for the sediment particles movement. In relation to \cref{eq1}, if $\boldsymbol{\Phi}_{t}$ is a function of say $Z_{t}$ only, then for any twice continuously differentiable function $g \in C^{2}(\mathbb{R})$ and It\^o process $Z_{t}$, we have from It\^o's lemma \citep{kloeden2011numerical}:
\begin{equation}\label{eq8}
	dg=\left[\frac{\partial g}{\partial t} + \bar{u}(z,t)\frac{\partial g}{\partial z} + \frac{1}{2} \sigma^{2}(z,t) \frac{\partial^{2} g}{\partial z^2}  \right] dt + \sigma(z,t) \frac{\partial g}{\partial z}dB_{t}
\end{equation}
Now taking the expectation on both sides of \cref{eq8} with the fact $\mathbb{E}(dB_{t}) = 0$, and then performing the integration by parts with $p(z,t)$,$\partial p(z,t)/\partial z$, $\partial^{2} p(z,t)/\partial z^{2}$ $\to 0$ as $z \to \mp \infty$, where $p(z,t)$ is the probability density function (PDF), we obtain:
\begin{equation}\label{eq9}
	\frac{\partial p}{\partial t}(z,t) = -\frac{\partial}{\partial z} \left[\left(\bar{w} - w_{s} + \frac{\partial D_{z}}{\partial z}\right) p(z,t)\right] + \frac{\partial^{2} }{\partial z^{2}} \left[D_{z} p(z,t)\right]
\end{equation}
\cref{eq9} is the one-dimensional Fokker-Planck equation of particle location PDF. Considering our model that is based on vertical and streamwise direction, the Fokker-Planck equation can be established using similar argument as follows:
\begin{align}\label{eq10}
	\frac{\partial p}{\partial t}(x,z,t) = -\frac{\partial}{\partial x} \left[\left(\bar{u} + \frac{\partial D_{x}}{\partial x}\right) p(x,z,t)\right] & -\frac{\partial}{\partial z} \left[\left(\bar{w} - w_{s} + \frac{\partial D_{z}}{\partial z}\right) p(x,z,t)\right] \nonumber \\ & + \frac{\partial^{2} }{\partial x^{2}} \left[D_{x} p(x,z,t)\right] + \frac{\partial^{2} }{\partial z^{2}} \left[D_{z} p(x,z,t)\right]
\end{align} 
The initial condition for the FPE \cref{eq10} corresponding to the condition \cref{eq2} can be given as:
\begin{align}\label{eq11}
p(x,z,0) = \delta (x-x_{0}) \delta (z-z_{0})
\end{align}
where $\delta$ denoted the Dirac delta function. 
\par 
Reflection mechanism at the boundaries can be invoked in \cref{eq10} by rewriting the equation and then applying some techniques. Let us first rewrite \cref{eq10} in the following form:
\begin{align}\label{eq12}
	\frac{\partial p}{\partial t}(\boldsymbol{s},t) + \sum_{j=1}^{2} \nabla \cdot J_{j}(\boldsymbol{s},t) = 0
\end{align}
where 
\begin{align}\label{eq13}
	J_{j}(\boldsymbol{s},t) = \bar{u}_{j}(\boldsymbol{s},t) p(\boldsymbol{s},t) - \sum_{i=1}^{2}  D_{ij}(\boldsymbol{s},t) p(\boldsymbol{s},t).
\end{align}
Here, $\boldsymbol{s} = (x,z)$, $\bar{\boldsymbol{u}} = (\bar{u}_{j})$, and $\boldsymbol{D} = [D_{ij}]$. \cref{eq13} represents the \textit{probability current}. Considering the region $D$ with volume $V$ and its boundary $\partial D$, one can have from the time derivative of total probability using \cref{eq12} and Gauss divergence theorem:
\begin{align}\label{eq14}
\frac{\partial}{\partial t} \int_{V} p(\boldsymbol{s},t) dV = \int_{V} \frac{\partial p(\boldsymbol{s},t)}{\partial t} dV =\sum_{j=1}^{2} \int_{V} \nabla \cdot J_{j}(\boldsymbol{s},t) dV = \sum_{j=1}^{2} \int_{\partial D} J_{j}(\boldsymbol{s},t) \cdot n_{D} dS
\end{align}
where $n_{D}$ is the outward drawn unit normal of $\partial D$. In order to make the total probability conserved in $D$, \cite{ferm2006conservative} deduced that $\boldsymbol{J} = 0$ for $\boldsymbol{s} \in \partial D$. This is the reflecting boundary condition associated with FPE. Also, for boundary at infinity, one can impose $\lim\limits_{s_{i} \to \infty} p(\boldsymbol{s},t) = 0$ \citep{gardiner1985handbook}. Thus, in relation with the domain described in \cref{subsec:rsdptm}, the boundary conditions become:
\begin{equation}\label{eq15}
\begin{aligned}
 - \left(\bar{w} - w_{s} + \frac{\partial D_{z}}{\partial z}\right) p(x,z,t) + \frac{\partial}{\partial z} \left[D_{z} p(x,z,t)\right]& = 0 ~~\text{at}~~ z=a~~\text{and}~~z=h\\
-\left(\bar{u} + \frac{\partial D_{x}}{\partial x}\right) p(x,z,t) + \frac{\partial}{\partial x} \left[D_{x} p(x,z,t)\right] &= 0~~ \text{at}~~ x=0\\
\lim\limits_{x \to \infty} p(x,z,t) &= 0 
\end{aligned}
\end{equation}
The FPE \cref{eq10} can be solved together with the initial condition \cref{eq11} and boundary conditions \cref{eq15}. 
\subsubsection{Lagrangian Framework}
\label{subsub:lagrange}
In order to apply the boundary conditions to the SD-PTM \cref{eq1}, i.e., to keep the solution \cref{eq7} inside the domain $D$, we need to decompose the stochastic process $\boldsymbol{\Phi}_{t}$ as a sum of two stochastic processes, say $\boldsymbol{\Phi}_{t} = \boldsymbol{\Psi}_{t} + \boldsymbol{K}_{t}$. Here, $\boldsymbol{\Psi}_{t}$ is governed by the original SD-PTM \cref{eq1}, i.e., it describes the behaviour of the process $\boldsymbol{\Phi}_{t}$ in the interior of the domain $D$, for which we must have $\boldsymbol{\Phi}_{t=0} = \boldsymbol{\Psi}_{t=0}$ and $\boldsymbol{\Phi}_{t} = \boldsymbol{\Psi}_{t}$ for $\boldsymbol{\Psi}_{t} \in \interior (D)$. The second process $\boldsymbol{K}_{t}$ determines the behaviour at the boundary and reflects the solution into $D$. Its initial value is set as $\boldsymbol{K}_{t} = 0$. This process $\boldsymbol{K}_{t}$ may also be thought as the minimal process, which forces $\boldsymbol{\Phi}_{t}$ to remain in the domain $D$. The measure induced by this process must be concentrated at those times, say $t_{e}$, where $\boldsymbol{\Phi}_{t} \in \partial D$. Mathematically, one can write:
\begin{equation}\label{eq16}
\left|\boldsymbol{K}\right|_{t} = \int_{0}^{t} \boldsymbol{1}_{\left\{\boldsymbol{\Phi}_{t} \in \partial D\right\}} d\left|\boldsymbol{K}\right|_{s}
\end{equation}
where $\boldsymbol{1}_{\left\{\boldsymbol{\Phi}_{t} \in \partial D\right\}}$ is the indicator function. The authors in \cite{bayer2010adaptive} informally call the process $\boldsymbol{K}_{t}$ as the \textit{local time}. This is because the measure induced by $\boldsymbol{K}_{t}$ characterizes the amount of time $\boldsymbol{\Phi}_{t}$ spends at the boundary. \cref{eq16} discusses about the behaviour at the boundary; however, we are yet to specify how the reflection at the boundary will happen. To that end, we assume that the process $\boldsymbol{K}_{t}$ will reflect $\boldsymbol{\Phi}_{t}$ into $\interior (D)$ in the direction of an inward drawn unit normal. For example, at the surface, i.e., $Z_{t} = h$, $\boldsymbol{K}_{t}$ will force the process in the downward direction, and at the bottom boundary, i.e., $Z_{t} = a$, it will push the process in the upward direction. Considering these, we can write:
\begin{equation}\label{eq17}
	\boldsymbol{K}_{t} = \int_{0}^{t} \gamma(s) d\left|\boldsymbol{K}\right|_{s}
\end{equation}
where $\gamma (s) \in \mathcal{N}(\boldsymbol{\Phi}_{s})$ if $\boldsymbol{\Phi}_{s} \in \partial D$ and $\mathcal{N}(\boldsymbol{x})$ denotes the set of all inward drawn unit normal vectors at the point $\boldsymbol{x}$. Therefore, reflected stochastic diffusion particle tracking model (RSD-PTM) reads as:
\begin{equation}\label{eq18}
d\boldsymbol{\Phi}_{t} = \bar{\boldsymbol{u}}(t,\boldsymbol{\Phi}_{t}) dt + \boldsymbol{\sigma} (t,\boldsymbol{\Phi}_{t}) d\boldsymbol{B}_{t} + d\boldsymbol{K}_{t}
\end{equation} 
\subsubsection{Hydraulic Variables and Parameters}
\label{subsub:modelparameters}
In the context of suspended sediment transport, the drift and diffusion coefficient in \cref{eq18} contain several hydraulic variables and parameters, namely the mean flow velocities, settling velocity, sediment diffusivity, as can be seen from \cref{eq4,eq5,eq6}. Based on our consideration of a two-dimensional flow, the mean velocities $\bar{u}$ and $\bar{w}$ should be prescribed. Generally speaking, the mean velocity profiles in turbulent flow are obtained using Reynolds-averaged Navier-Stokes (RANS) equation. The classical logarithmic law of velocity reads as follows \citep{dey2014fluvial}:
\begin{equation}\label{eq19}
	\bar{u} = \frac{u_{*}}{\kappa}\ln \frac{z}{z_{0}}  
\end{equation}
where $u_{*}$ is the shear velocity, $\kappa$ is the von-K\'{a}rm\'{a}n constant, and $z_{0}$ is the zero-velocity level. The constant $\kappa$ is typically assumed as 0.41. Introdcuing the concept of equivalent roughness $k_{s}$, \cite{nikuradse1933stromungsgesetze} divided the flow regimes into smooth, rough, and transitional. The shear Reynolds number, defined as $R_{*} = u_{*}k_{s}/\nu_{f}$, classifies the flow regimes:hydraulically smooth flow ($R_{*} \leq 5$), hydraulically rough flow ($R_{*} \geq 70$), and hydraulically transitional flow ($5 < R_{*} < 70$). Accordingly, based on Nikuradse's pipe flow experiment, the zero-velocity level is given as:
\begin{equation}\label{eq19_zero_vel_level}
	z_{0} = \begin{cases}
		0.11\frac{\nu_{f}}{u_{*}} & \text{if } R_{*} \leq 5\\
		\frac{k_{s}}{30} & \text{if } R_{*} \geq 70\\
		0.11\frac{\nu_{f}}{u_{*}} + \frac{k_{s}}{30} & \text{if } 5<R_{*}<70
	\end{cases}
\end{equation}
We choose the following formula for $k_{s}$ proposed by \cite{sumer1996velocity}:
\begin{equation}\label{eq19_roughness}
	k_{s} = \begin{cases}
		d_{50}\left[2 + 0.6\Theta^{2.5}\right] & \text{if } w_{s} > 0.9u_{*}\\
		d_{50}\left[4.5+0.25\Theta^{2.5} \exp \left(0.6W_{*}^{4}\right) \right] & \text{if } w_{s} \leq 0.9u_{*}
	\end{cases}
\end{equation}
where $\Theta$ is the Shields parameter, defined as $\Theta = u_{*}^{2}/\left(\Delta g d_{50}\right)$, in which $\Delta = s-1$, $s$ being the relative density of sediment; $w_{s}$ is the settling velocity; and $W_{*}$ is defined as $W_{*} = w_{s}/\sqrt{\Delta g d_{50}}$.
For the flow configuration considered in this study, the mean vertical velocity $\bar{w}$ is zero. 
\par 
The sediment diffusion coefficient in the vertical direction, $D_{z}$, is proportional to the turbulent diffusivity $D_{0z}$. The proportional coefficient is known as the turbulent Schmidt number, defined as $Sc=D_{z}/D_{0z}$ \citep{dey2014fluvial}. The turbulent diffusivity or eddy viscosity is mainly derived
using the analogy of Newton’s law of viscosity for turbulent flow. One of the widely used eddy viscosity profile is given by the following parabolic equation \citep{rijn1984sediment,graf2002suspension}:
\begin{equation}\label{eq20}
	D_{0z} = \kappa u_{*} z\left(1-\frac{z}{h}\right)
\end{equation}
The parameter $Sc$ is an important quantity to accurately predict the sediment concentration distribution \citep{graf2002suspension}. There are several formulae available in the literature for estimating $Sc$. We consider the most recent formula proposed by \cite{pal2016effect}, and is given for dilute sediment-water mixture flow as follows:
\begin{equation}\label{eq21}
	\Sc = 0.033 \left(\frac{w_{s}}{u_{*}}\right)^{0.931} \xi_{a}^{-1.196} c_{a}^{-0.118}
\end{equation}
Here, $\xi_{a}$ is the normalized reference level from where the suspension region starts, and $c_{a}$ is the suspended sediment concentration measure at $\xi_{a}$. The discussion on sediment diffusivity in the streamwise direction can be approached by considering \cite{socolofsky2005special}'s proposal. According to their findings, the longitudinal and transverse components are deemed equivalent, with no observed boundary effects in these two directions. The transverse component of sediment diffusivity, $D_{y}$, can be taken from \cite{fischer1979mixing}, and hence we have the following equation for the streamwise component:
\begin{equation}\label{eq22}
	D_{x} \approx D_{y} = 0.15 u_{*} z
\end{equation} 
The settling velocity $w_{s}$, also known as the terminal fall velocity, is the constant velocity attained by a particle when moving down through water column. Generally, $w_{s}$ depends on the size of the particle through particle diameter. We estimate $w_{s}$ from the formula given by \cite{cheng1997simplified}, as follows: 
\begin{equation}\label{eq23}
	w_{s}  = \frac{\nu_{f}}{d} \left(\sqrt{25+1.2d_{*}^{2}}-5\right)^{3/2}
\end{equation}
where $d_{*}=\left(\Delta g/\nu_{f}^{2}\right)^{1/3}d$. Here, $d$ denotes the particle diameter, $g$ is the acceleration due to gravity, $\nu_{f}$ is the kinematic viscosity of fluid, and $\Delta$ is the submerged specific gravity. Using these expressions \cref{eq19,eq20,eq21,eq22,eq23}, we can reformulate the 2D Langevin model in the following form:
\begin{equation}\label{eq24}
\begin{bmatrix}
	dX_{t}\\
	dZ_{t}
\end{bmatrix} = \begin{bmatrix}
\frac{u_{*}}{\kappa}\ln \frac{Z_{t}}{z_{0}}\\
-w_{s}+\kappa u_{*} \Sc \left(1-\frac{2Z_{t}}{h}\right)
\end{bmatrix} dt +\begin{bmatrix}
\sqrt{0.30 u_{*} Z_{t}} & 0\\
0 & \sqrt{2 \kappa u_{*} \Sc Z_{t}\left(1-\frac{Z_{t}}{h}\right)}
\end{bmatrix} \begin{bmatrix}
dB_{1t}\\
dB_{2t}
\end{bmatrix} + \begin{bmatrix}
dK_{1t}\\
dK_{2t}
\end{bmatrix}
\end{equation} 

\section{Existence and Uniqueness of the Solution to RSDPTM}
\label{sec:existence}
\cite{tanaka1979stochastic} provided the proof for the existence and uniqueness of solutions to the RSDEs under the condition that the solution domain is convex, and specific constraints are met by the drift and diffusion coefficients. Let us rewrite the proposed RSDE. 
\par 
Let $D \in \mathbb{R}^{2}$ and $\boldsymbol{B}_{t}$ denotes two independent Wiener processes. Given an $\mathbb{R}^{2}$-valued function $\bar{\boldsymbol{u}}(t,\boldsymbol{\Phi}_{t})$ and $\mathbb{R}^{2} \times \mathbb{R}^{2}$-valued function $\boldsymbol{\sigma} (t,\boldsymbol{\Phi}_{t})$, both being defined on $\mathbb{R}^{+} \times \bar{D}$. We consider the following RSDE: 
\begin{equation}\label{eq25}
	\left\{\begin{array}{l}	d\boldsymbol{\Phi}_{t} = \bar{\boldsymbol{u}}(t,\boldsymbol{\Phi}_{t}) dt + \boldsymbol{\sigma} (t,\boldsymbol{\Phi}_{t}) d\boldsymbol{B}_{t} + d\boldsymbol{K}_{t}\\ 	\boldsymbol{\Phi}_{0} = \boldsymbol{\phi}_{0}\end{array}\right.
\end{equation}
where the solution $\boldsymbol{\Phi}_{t}=\left(X_{t}, Z_{t}\right) \in \bar{D}$. Assume that $\bar{\boldsymbol{u}}(t,\boldsymbol{\Phi}_{t})$ and $\boldsymbol{\sigma} (t,\boldsymbol{\Phi}_{t})$ are Borel measurable. Considering $D$ as a convex domain in $\mathbb{R}^{2}$, \cite{tanaka1979stochastic} provided the following theorems. 
\begin{theorem}\label{thm1}
	\citep{tanaka1979stochastic}. If $\bar{\boldsymbol{u}}(t,\boldsymbol{\Phi}_{t})$ and $\boldsymbol{\sigma} (t,\boldsymbol{\Phi}_{t})$ are bounded continous on $\mathbb{R}^{+} \times \bar{D}$, then on some probability space $\left(\Omega, \mathcal{F}, P\right)$, we can find a two-dimensional Brownian motion $\boldsymbol{B}_{t}$ in such a way that \cref{eq25} has a solution. 
\end{theorem}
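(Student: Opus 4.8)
The plan is to establish \emph{weak} existence by Skorokhod's approximation method, using as the central tool the solvability and Lipschitz continuity of the Skorokhod problem on a convex domain. Recall that, given a continuous driving path $w$ with $w(0)\in\bar D$, the Skorokhod problem asks for a pair $(\phi,k)$ with $\phi=w+k$, $\phi(t)\in\bar D$ for all $t$, $k$ of locally bounded variation with $k(0)=0$, and $k$ reflecting only at the boundary along an inward normal direction in the sense of \cref{eq16,eq17}. For a convex domain this problem has a unique solution and the associated Skorokhod map $\Gamma\colon w\mapsto\phi$ is Lipschitz in the sup-norm on compact time intervals. This reduces \cref{eq25} to finding a process solving the fixed-point relation $\boldsymbol{\Phi}=\Gamma\!\left(\boldsymbol{\phi}_0+\int_0^\cdot\bar{\boldsymbol{u}}(s,\boldsymbol{\Phi}_s)\,ds+\int_0^\cdot\boldsymbol{\sigma}(s,\boldsymbol{\Phi}_s)\,d\boldsymbol{B}_s\right)$, after which $\boldsymbol{K}$ is recovered as $\boldsymbol{\Phi}-(\text{argument of }\Gamma)$.

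First I would regularise the coefficients. Since $\bar{\boldsymbol{u}}$ and $\boldsymbol{\sigma}$ are only bounded continuous, a direct contraction argument is unavailable; instead I would mollify them to obtain smooth, globally Lipschitz $\bar{\boldsymbol{u}}_n,\boldsymbol{\sigma}_n$ that are uniformly bounded by the same constant and converge to $\bar{\boldsymbol{u}},\boldsymbol{\sigma}$ uniformly on compact subsets of $\mathbb{R}^{+}\times\bar D$. For each $n$, the Lipschitz property of $\Gamma$ together with the Lipschitz coefficients makes the map $\boldsymbol{\Phi}\mapsto\Gamma\!\left(\boldsymbol{\phi}_0+\int_0^\cdot\bar{\boldsymbol{u}}_n\,ds+\int_0^\cdot\boldsymbol{\sigma}_n\,d\boldsymbol{B}_s\right)$ a contraction on $C([0,T];\bar D)$ for small $T$, so Picard iteration yields a unique strong solution $(\boldsymbol{\Phi}^{n},\boldsymbol{K}^{n})$ on a fixed probability space carrying $\boldsymbol{B}_t$; patching over successive intervals gives a global strong solution.

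Next I would prove tightness of the laws of $\{(\boldsymbol{\Phi}^{n},\boldsymbol{K}^{n},\boldsymbol{B})\}_n$ on the relevant path space. Uniform boundedness of $\bar{\boldsymbol{u}}_n,\boldsymbol{\sigma}_n$ gives, via Burkholder--Davis--Gundy and Kolmogorov's criterion, uniform moment and modulus-of-continuity estimates for the free driving paths $\boldsymbol{W}^{n}=\boldsymbol{\phi}_0+\int_0^\cdot\bar{\boldsymbol{u}}_n\,ds+\int_0^\cdot\boldsymbol{\sigma}_n\,d\boldsymbol{B}_s$; the Lipschitz continuity of $\Gamma$ then transfers these estimates to $\boldsymbol{\Phi}^{n}=\Gamma(\boldsymbol{W}^{n})$, and the corresponding control of the total variation of $\boldsymbol{K}^{n}=\boldsymbol{\Phi}^{n}-\boldsymbol{W}^{n}$ follows from the oscillation bounds inherent in the Skorokhod map on a convex domain. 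By Prokhorov's theorem a subsequence converges weakly, and by the Skorokhod representation theorem I would realise the $(\boldsymbol{\Phi}^{n},\boldsymbol{K}^{n},\boldsymbol{B}^{n})$ together with a limit $(\boldsymbol{\Phi},\boldsymbol{K},\boldsymbol{B})$ on a common probability space $(\Omega,\mathcal{F},P)$ with almost-sure uniform-on-compacts convergence; here $\boldsymbol{B}$ is exactly the Brownian motion whose existence the theorem asserts.

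Finally I would identify the limit. Almost-sure convergence $\boldsymbol{\Phi}^{n}\to\boldsymbol{\Phi}$ together with uniform-on-compacts convergence of the coefficients lets me pass to the limit in $\int_0^t\bar{\boldsymbol{u}}_n(s,\boldsymbol{\Phi}^{n}_s)\,ds$ and, after checking that each $\boldsymbol{B}^{n}$ remains a Brownian motion in the limiting filtration, in the stochastic integral $\int_0^t\boldsymbol{\sigma}_n(s,\boldsymbol{\Phi}^{n}_s)\,d\boldsymbol{B}^{n}_s$, so that $\boldsymbol{\Phi}$ satisfies \cref{eq25} with reflection term $\boldsymbol{K}=\lim\boldsymbol{K}^{n}$. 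The delicate step---and the one I expect to be the main obstacle---is verifying that this limiting $\boldsymbol{K}$ is a \emph{bona fide} reflection term: that it retains locally bounded variation (not merely continuity), that its variation measure stays concentrated on $\{s:\boldsymbol{\Phi}_s\in\partial D\}$, and that it still points along the inward normal set $\mathcal{N}(\boldsymbol{\Phi}_s)$ as in \cref{eq17}. This is precisely where convexity of $D$ is indispensable: the inward-normal requirement defines a closed, convex-valued constraint that is stable under the limit, and the Lipschitz oscillation estimates for $\Gamma$ prevent the variation of $\boldsymbol{K}^{n}$ from concentrating or blowing up, so that $(\boldsymbol{\Phi},\boldsymbol{K})$ is the Skorokhod solution associated with the limiting driving path and \cref{eq25} holds.
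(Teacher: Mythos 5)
The paper does not prove this statement at all: \cref{thm1} is quoted verbatim from \cite{tanaka1979stochastic} as a known result, and the authors' own contribution begins only afterwards, in \cref{lemma1,lemma2,lemma3}, where they verify that their specific coefficients satisfy the hypotheses of \cref{thm1,thm2}. So there is no in-paper proof to compare yours against; what can be said is how your sketch relates to Tanaka's original argument, which is the proof the citation points to.

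Your outline is essentially a faithful reconstruction of that argument: reduce \cref{eq25} to a fixed point of the Skorokhod map $\Gamma$ on the convex domain, solve first for regularised (Lipschitz) coefficients, then obtain the general case by tightness, Prokhorov, Skorokhod representation, and identification of the limit, with the limiting Brownian motion supplying the ``on some probability space'' clause. This is the right architecture, and you correctly locate the delicate step in showing that the limiting $\boldsymbol{K}$ remains a bounded-variation, boundary-supported, inward-normal reflection term. Two caveats. First, you lean on the claim that $\Gamma$ is Lipschitz in the sup-norm for a general convex domain; this is true for the half-line and for the rectangle-type domain $\mathbb{R}^{+}\times(a,h)$ actually used in the paper, but it is not available for an arbitrary convex $D$, and Tanaka's own existence proof deliberately avoids it --- his Lemma 2.2 gives an $L^{2}$-type comparison estimate involving the total variations of the reflection terms, and the Lipschitz-coefficient case is handled by a Picard iteration built on that estimate rather than on a pathwise contraction. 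Relatedly, even granting Lipschitz $\Gamma$, the map you write down is not a pathwise contraction on $C([0,T];\bar D)$ because of the stochastic integral; the contraction must be run in $L^{2}(\Omega;C([0,T]))$ using Burkholder--Davis--Gundy and Gronwall. Second, the uniform total-variation bound on $\boldsymbol{K}^{n}$ needed for tightness does not come for free from ``oscillation bounds inherent in the Skorokhod map''; it is a separate estimate (again Tanaka's Section 2) exploiting convexity via the inequality $\langle \phi(t)-c,\,dk(t)\rangle\le 0$ for an interior point $c$. With those repairs your proposal matches the cited proof in substance.
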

\begin{theorem}\label{thm2}
	\citep{tanaka1979stochastic}. Let $\rho$ and $\bar{\rho}$ satisfy
	\begin{equation}\label{eq26}
		\int_{0^{+}} \left[\rho^{2}(u) u^{-1} + \bar{\rho}(u)\right]^{-1} du = \infty,
	\end{equation}
	\begin{equation}\label{eq27}
		\rho^{2}(u) u^{-1} + \bar{\rho}(u)~\text{is concave}
	\end{equation}
	Then, for any $\bar{\boldsymbol{u}}(t,\boldsymbol{\Phi}_{t})$ and $\boldsymbol{\sigma} (t,\boldsymbol{\Phi}_{t})$ satisfying $$\|\bar{\boldsymbol{u}}(t,\boldsymbol{\Phi}_{t})-\bar{\boldsymbol{u}}(t,\boldsymbol{\Psi}_{t})\| \leq \bar{\rho} \|\boldsymbol{\Phi}_{t}-\boldsymbol{\Psi}_{t}\|,~ \|\boldsymbol{\sigma} (t,\boldsymbol{\Phi}_{t})-\boldsymbol{\sigma} (t,\boldsymbol{\Psi}_{t})\| \leq \rho \|\boldsymbol{\Phi}_{t}-\boldsymbol{\Psi}_{t}\|,$$ the pathwise uniqueness of solutions holds for \cref{eq25}.
\end{theorem}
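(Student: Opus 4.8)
The plan is to follow the Yamada--Watanabe method as adapted to reflected diffusions by \cite{tanaka1979stochastic}, the essential new ingredient beyond the classical scalar theorem being the convexity of $D$, which controls the contribution of the reflection terms. Suppose $\boldsymbol{\Phi}_{t}$ and $\boldsymbol{\Psi}_{t}$ are two solutions of \cref{eq25} defined on the same probability space, driven by the same Brownian motion $\boldsymbol{B}_{t}$ and sharing the initial value $\boldsymbol{\phi}_{0}$, with associated reflection processes $\boldsymbol{K}^{\Phi}_{t}$ and $\boldsymbol{K}^{\Psi}_{t}$. Writing $\boldsymbol{\xi}_{t} = \boldsymbol{\Phi}_{t} - \boldsymbol{\Psi}_{t}$ and $r_{t} = \|\boldsymbol{\xi}_{t}\|$, I would show $\mathbb{E}[r_{t}] = 0$ for every $t$, which gives pathwise uniqueness. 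First I would introduce Yamada--Watanabe approximants of the Euclidean norm: using \cref{eq26}, pick a decreasing sequence $a_{n}\downarrow 0$ with $\int_{a_{n}}^{a_{n-1}} h(u)^{-1}\,du = n$, where $h(u) = \rho^{2}(u)u^{-1} + \bar{\rho}(u)$, and build even $C^{2}$ functions $\psi_{n}\colon\mathbb{R}\to[0,\infty)$ with $\psi_{n}(r)\uparrow r$, $0\le\psi_{n}'\le 1$, and $0\le\psi_{n}''(u)\le \tfrac{2}{n\,h(u)}\mathbf{1}_{[a_{n},a_{n-1}]}(u)$. Setting $\phi_{n}(\boldsymbol{x}) = \psi_{n}(\|\boldsymbol{x}\|)$ yields a smooth radial approximation of $\|\boldsymbol{x}\|$ whose gradient is $\psi_{n}'(\|\boldsymbol{x}\|)\,\boldsymbol{x}/\|\boldsymbol{x}\|$ and whose Hessian splits into a radial part carrying $\psi_{n}''$ and a tangential part carrying $\psi_{n}'(\|\boldsymbol{x}\|)/\|\boldsymbol{x}\|$.

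Next I would apply It\^o's formula to $\phi_{n}(\boldsymbol{\xi}_{t})$ and take expectations, localizing by stopping times so the stochastic integral is a martingale with zero mean. Four groups of terms appear. The drift term is bounded, using $\|\nabla\phi_{n}\|\le 1$ and the modulus hypothesis on $\bar{\boldsymbol{u}}$, by $\bar{\rho}(r_{s})$. In the second-order term, contracting the Hessian against the positive semidefinite matrix built from $\boldsymbol{\sigma}(s,\boldsymbol{\Phi}_{s})-\boldsymbol{\sigma}(s,\boldsymbol{\Psi}_{s})$ and using $\|\boldsymbol{\sigma}(s,\boldsymbol{\Phi}_{s})-\boldsymbol{\sigma}(s,\boldsymbol{\Psi}_{s})\|\le\rho(r_{s})$ separates into a persistent tangential part bounded by $\tfrac12 \rho^{2}(r_{s})/r_{s}$ and a transient radial part bounded by $\tfrac12\psi_{n}''(r_{s})\rho^{2}(r_{s}) \le 1/n$ on $(0,1]$, which vanishes after integration as $n\to\infty$. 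The remaining group is the reflection contribution $\langle\nabla\phi_{n}(\boldsymbol{\xi}_{s}),\,d\boldsymbol{K}^{\Phi}_{s}-d\boldsymbol{K}^{\Psi}_{s}\rangle$.

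The crux is to show this reflection contribution is non-positive, and this is precisely where convexity of $D$ enters. Since $\nabla\phi_{n}(\boldsymbol{\xi}_{s})$ points in the direction $(\boldsymbol{\Phi}_{s}-\boldsymbol{\Psi}_{s})/r_{s}$, and $d\boldsymbol{K}^{\Phi}_{s} = \gamma^{\Phi}(s)\,d|\boldsymbol{K}^{\Phi}|_{s}$ is supported on $\{\boldsymbol{\Phi}_{s}\in\partial D\}$ with $\gamma^{\Phi}(s)$ an inward normal, the supporting-hyperplane characterization of a convex set gives $\langle\gamma^{\Phi}(s),\boldsymbol{\Psi}_{s}-\boldsymbol{\Phi}_{s}\rangle\ge 0$ because $\boldsymbol{\Psi}_{s}\in\bar{D}$; hence $\langle\nabla\phi_{n}(\boldsymbol{\xi}_{s}),\gamma^{\Phi}(s)\rangle\le 0$, and symmetrically for the $\boldsymbol{\Psi}$ reflection when $\boldsymbol{\Psi}_{s}\in\partial D$. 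Both reflection terms are therefore $\le 0$ and may be discarded from the upper bound. Collecting the surviving terms and letting $n\to\infty$, with $\phi_{n}(\boldsymbol{\xi}_{t})\uparrow r_{t}$ by monotone convergence, yields $\mathbb{E}[r_{t}] \le \mathbb{E}\int_{0}^{t} h(r_{s})\,ds$.

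Finally I would close the estimate with a Bihari--Osgood argument. By the concavity assumption \cref{eq27} and Jensen's inequality, $\mathbb{E}[h(r_{s})]\le h(\mathbb{E}[r_{s}])$, so $m(t):=\mathbb{E}[r_{t}]$ satisfies $m(t)\le\int_{0}^{t}h(m(s))\,ds$ with $m(0)=0$. The divergence \cref{eq26} of $\int_{0^{+}}h(u)^{-1}\,du$ then forces $m\equiv 0$ by Osgood's uniqueness criterion, giving $r_{t}=0$ almost surely for all $t$ and hence $\boldsymbol{\Phi}_{t}=\boldsymbol{\Psi}_{t}$. I expect the main obstacle to be the reflection step: verifying that the two local-time terms carry the correct sign, which rests entirely on convexity of $D$ and the inward-normal reflection directions $\gamma$, together with the delicate Hessian bookkeeping that must produce exactly the combined modulus $\rho^{2}(u)u^{-1}+\bar{\rho}(u)$ appearing in \cref{eq26,eq27}.
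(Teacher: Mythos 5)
Your proposal is sound and is essentially the argument of the cited source: the paper itself states \cref{thm2} without proof, deferring entirely to \cite{tanaka1979stochastic}, and Tanaka's own proof is exactly the Yamada--Watanabe scheme you describe, with the convexity of $D$ used through the supporting-hyperplane inequality $\langle\gamma^{\Phi}(s),\boldsymbol{\Psi}_{s}-\boldsymbol{\Phi}_{s}\rangle\ge 0$ to kill the reflection terms, followed by the Jensen/Osgood closure under \cref{eq26,eq27}. No gaps; the only cosmetic point is that a harmless multiplicative constant appears in front of $\rho^{2}(u)u^{-1}+\bar{\rho}(u)$ after the Hessian bookkeeping, which does not affect either the concavity or the divergence of the integral.
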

For our RSDE given by \cref{eq24}, we need to prove each of the conditions to ensure the existence and uniqueness of solutions. Each of the steps is given in what follows.
\begin{lemma}\label{lemma1}
	The solution domain $D=\mathbb{R}^{+} \times D_{1}$ given in \cref{subsec:rsdptm} is convex.
\end{lemma}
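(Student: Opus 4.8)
The plan is to verify convexity directly from the definition, exploiting the product structure of $D$. Recall that a set $S \subseteq \mathbb{R}^{2}$ is convex if, for every pair of points $\boldsymbol{p}, \boldsymbol{q} \in S$ and every $\lambda \in [0,1]$, the convex combination $\lambda \boldsymbol{p} + (1-\lambda)\boldsymbol{q}$ again lies in $S$. Since $D = \mathbb{R}^{+} \times D_{1}$ is a Cartesian product of two subsets of $\mathbb{R}$, I would first establish convexity of each factor separately and then invoke the standard fact that a product of convex sets is convex.

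First I would note that each factor is an interval. The streamwise factor $\mathbb{R}^{+} = [0,\infty)$ is a half-line, and the vertical factor $D_{1} = [a,h]$ is a closed bounded interval; both are convex, since for any two reals $r_{1} \leq r_{2}$ lying in an interval $I$ and any $\lambda \in [0,1]$ we have $r_{1} \leq \lambda r_{1} + (1-\lambda) r_{2} \leq r_{2}$, so the combination stays in $I$.

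Next I would carry out the product step explicitly. Taking two arbitrary points $\boldsymbol{p} = (x_{1}, z_{1})$ and $\boldsymbol{q} = (x_{2}, z_{2})$ in $D$ and a parameter $\lambda \in [0,1]$, the convex combination is
\begin{equation*}
\lambda \boldsymbol{p} + (1-\lambda) \boldsymbol{q} = \bigl(\lambda x_{1} + (1-\lambda) x_{2},\ \lambda z_{1} + (1-\lambda) z_{2}\bigr).
\end{equation*}
Its first coordinate is a convex combination of $x_{1}, x_{2} \in \mathbb{R}^{+}$, hence lies in $\mathbb{R}^{+}$ by the convexity of the half-line; its second coordinate is a convex combination of $z_{1}, z_{2} \in D_{1}$, hence lies in $D_{1}$ by the convexity of the interval. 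Therefore the combined point lies in $\mathbb{R}^{+} \times D_{1} = D$, which establishes the claim.

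As for difficulty, there is essentially no obstacle here: the statement is elementary and follows from the convexity of intervals together with the coordinatewise nature of convex combinations in a product space. The only point requiring minor care is to keep the argument valid for the unbounded factor $\mathbb{R}^{+}$, but since a convex combination of two nonnegative numbers can never become negative, the half-line causes no trouble, and the same coordinatewise reasoning would extend verbatim to a product of any finite number of convex factors.
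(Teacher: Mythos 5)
Your proof is correct and takes essentially the same approach as the paper: a direct coordinatewise verification that a convex combination of two points of $D$ stays in $D$, with the only cosmetic difference being that you first isolate the convexity of each interval factor before combining them. (The paper works with the open version $(0,\infty)\times(a,h)$ while you use the closed intervals, but the argument is identical either way.)
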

\begin{proof}
	The domain in our model is given as $D = \left(0,\infty\right) \times \left(a,h\right)$. To prove that $D$ is convex, we need to show that for two points $\left(x_{1},y_{1}\right)$ and $\left(x_{2},y_{2}\right)$ in $D$, the linear combination $\lambda \left(x_{1},y_{1}\right) + (1-\lambda)\left(x_{2},y_{2}\right) \in D$, where $0 \leq \lambda \leq 1$. Considering the linear combination, we have the coordinate $\left(\lambda x_{1}+(1-\lambda)x_{2}, \lambda y_{1}+(1-\lambda) y_{2}\right)$. Since $0 < x_{1}, x_{2} < \infty$ and $a < y_{1}, y_{2} < h$, we have $$ 0 < \lambda x_{1}+(1-\lambda)x_{2} < \infty~\text{and}~a<\lambda y_{1}+(1-\lambda)y_{2}<h$$
	Hence, $\left(\lambda x_{1}+(1-\lambda)x_{2}, \lambda y_{1}+(1-\lambda) y_{2}\right) \in D$. This completes the proof that $D$ is a convex set. 
\end{proof}
Before establishing the existence and uniqueness of the solution of RSDPTM, we prove the following lemma that will be useful. 
\begin{lemma}\label{lemma2}
	If $f(x)$ is bounded Lipschitz function on a domain $\Omega$, then it is H\"{o}lder continuous on $\Omega$ with exponent $\alpha$, where $\alpha \in (0,1)$.  
\end{lemma}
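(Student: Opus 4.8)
The plan is to split the estimate according to whether two points are close together or far apart, using the Lipschitz bound in the former regime and the boundedness in the latter. Let $L$ denote the Lipschitz constant, so that $|f(x)-f(y)| \le L\,|x-y|$ for all $x,y \in \Omega$, and let $M$ be a bound with $|f(x)| \le M$ on $\Omega$. Fix an arbitrary exponent $\alpha \in (0,1)$. The whole argument rests on the elementary fact that the maps $r \mapsto r$ and $r \mapsto r^{\alpha}$ cross at $r=1$: one has $r \le r^{\alpha}$ for $0 \le r \le 1$ and $r^{\alpha} \ge 1$ for $r \ge 1$. I would therefore set $r = |x-y|$ and treat the cases $r \le 1$ and $r > 1$ separately.

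First I would handle $|x-y| \le 1$. Since $\alpha \in (0,1)$, here $|x-y| \le |x-y|^{\alpha}$, so the Lipschitz estimate gives directly $|f(x)-f(y)| \le L\,|x-y| \le L\,|x-y|^{\alpha}$. Next I would handle $|x-y| > 1$, where the Lipschitz bound is useless because now $|x-y|^{\alpha} < |x-y|$. This is precisely where boundedness enters: the triangle inequality yields $|f(x)-f(y)| \le |f(x)| + |f(y)| \le 2M$, while $|x-y| > 1$ forces $|x-y|^{\alpha} > 1$, so that $|f(x)-f(y)| \le 2M \le 2M\,|x-y|^{\alpha}$.

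Combining the two cases with the single constant $C = \max\{L,\,2M\}$ gives $|f(x)-f(y)| \le C\,|x-y|^{\alpha}$ for all $x,y \in \Omega$, which is the claimed H\"{o}lder continuity with exponent $\alpha$. Since the Lipschitz and boundedness inputs are applied in disjoint regimes and each case is a one-line comparison, there is no genuine computational obstacle. The one point worth flagging is conceptual rather than technical: the boundedness hypothesis is \emph{essential}, as it is exactly what controls the increment for $|x-y| > 1$. Without it the conclusion fails globally (for instance $f(x)=x$ on $\mathbb{R}$ is Lipschitz but not $\alpha$-H\"{o}lder for $\alpha<1$), so the substance of the lemma lies in recognizing that the threshold $|x-y|=1$ isolates the regime where the bound $M$, and not the Lipschitz constant $L$, governs the estimate.
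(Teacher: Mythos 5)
Your proof is correct and follows essentially the same route as the paper's: both split at the threshold $\|x-y\|=1$, using the Lipschitz bound together with $r\le r^{\alpha}$ on the near regime and the boundedness together with $r^{\alpha}\ge 1$ on the far regime. Your added remark on the necessity of the boundedness hypothesis is a nice touch but does not change the argument.
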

\begin{proof}
	For $\|x-y\| \leq 1$, we have $$ \|f(x)-f(y)\| \leq L \|x-y\| =L \|x-y\|^{\alpha} \|x-y\|^{1-\alpha} \leq L \|x-y\|^{\alpha}$$
	Since $f(x)$ is bounded, we have $\|f(x)\| \leq C$. Therefore, for $\|x-y\| > 1$, $$\|f(x)-f(y)\| \leq \|f(x)\|+\|f(y)\| \leq 2C \leq 2C \|x-y\|^{\alpha}$$
	Hence, $f(x)$ is H\"{o}lder continuous on $\Omega$ with exponent $\alpha$.
\end{proof}
Our case deals with the finite dimensional vector spaces over the reals. Therefore, all norms are equivalent. For convenience, we choose maximum norm, which is defined for a vector $x=\left(x_{1},x_{2},...,x_{n}\right)$, as $\|x\|_{\infty} = \max \left(|x_{1}|,...,|x_{n}|\right)$.
\begin{lemma}\label{lemma3}
	The RSDPTM given by \cref{eq24} satisfies the conditions in \cref{thm1,thm2}, which ensures the existence and uniqueness of its solutions.
\end{lemma}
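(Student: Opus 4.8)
The plan is to verify, one hypothesis at a time, that the coefficients of \cref{eq24} meet the requirements of \cref{thm1,thm2} on the closed domain $\bar{D}=[0,\infty)\times[a,h]$, convexity having already been secured in \cref{lemma1}. The first observation that organizes everything is that every entry of the drift vector and of the diffusion matrix in \cref{eq24} depends on the state only through $Z_{t}$; consequently the unboundedness of the streamwise coordinate is immaterial, and all estimates reduce to the compact interval $[a,h]$. The second, decisive, structural fact is that the reference level satisfies $a>0$, so $Z_{t}$ is bounded away from the origin. This is exactly what rescues the logarithmic drift $\tfrac{u_{*}}{\kappa}\ln(Z_{t}/z_{0})$ and the diffusion entry $\sqrt{0.30\,u_{*}Z_{t}}$ from their singularities at $z=0$. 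For \cref{thm1} I would then simply note that the logarithm, both square roots, and the affine second drift component are continuous in $Z_{t}$ on $[a,h]$ (the radicand $Z_{t}(1-Z_{t}/h)$ staying nonnegative and vanishing only at $Z_{t}=h$); continuity on a compact set yields boundedness, so the bounded-continuity hypothesis holds and a weak solution exists.

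For the pathwise-uniqueness hypotheses of \cref{thm2} I must exhibit moduli $\rho$ and $\bar{\rho}$ that control the increments of $\boldsymbol{\sigma}$ and $\bar{\boldsymbol{u}}$ and satisfy \cref{eq26,eq27}. The drift is the routine part: neither component depends on $X_{t}$, and both are $C^{1}$ in $Z_{t}$ on $[a,h]$ with bounded derivative (the logarithmic term differentiates to $u_{*}/(\kappa Z_{t})$, bounded because $Z_{t}\ge a>0$, while the second component is affine), so $\bar{\boldsymbol{u}}$ is globally Lipschitz in the maximum norm and I may take $\bar{\rho}(u)=Lu$. The same argument disposes of the first diffusion entry, since $\sqrt{0.30\,u_{*}Z_{t}}$ has derivative bounded on $[a,h]$ by virtue of $Z_{t}\ge a>0$ and is therefore Lipschitz as well.

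The one genuinely delicate coefficient, and the main obstacle, is $\sigma_{22}=\sqrt{2\kappa u_{*}\Sc\,Z_{t}(1-Z_{t}/h)}$: its derivative blows up as $Z_{t}\to h$ because the radicand degenerates to zero there, so $\sigma_{22}$ fails to be Lipschitz at the free surface and is merely H\"{o}lder-$\tfrac12$ in a neighbourhood of $z=h$, via the elementary bound $|\sqrt{\alpha}-\sqrt{\beta}|\le\sqrt{|\alpha-\beta|}$. To produce a single modulus valid on all of $[a,h]$ I would invoke \cref{lemma2}: on the complementary region $[a,h-\delta]$ the coefficient is Lipschitz and hence, by \cref{lemma2}, H\"{o}lder-$\tfrac12$, so the two pieces combine into a uniform estimate $\|\boldsymbol{\sigma}(t,\boldsymbol{\Phi}_{t})-\boldsymbol{\sigma}(t,\boldsymbol{\Psi}_{t})\|\le c\,\|\boldsymbol{\Phi}_{t}-\boldsymbol{\Psi}_{t}\|^{1/2}$, i.e.\ $\rho(u)=c\sqrt{u}$. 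It then remains to check \cref{eq26,eq27} for $\rho(u)=c\sqrt{u}$ and $\bar{\rho}(u)=Lu$: the quantity $\rho^{2}(u)u^{-1}+\bar{\rho}(u)$ is affine and hence concave, giving \cref{eq27}, while the half-order H\"{o}lder modulus $\rho(u)=c\sqrt{u}$ is precisely the canonical borderline case for which the Yamada--Watanabe integral underlying \cref{eq26} diverges, giving pathwise uniqueness. The crux of the whole argument is thus the square-root degeneracy of the vertical diffusion at the water surface, which forces one out of the standard Lipschitz theory into the Yamada--Watanabe/Tanaka regime, with \cref{lemma2} serving as the bridge that upgrades the local Lipschitz bounds into the global H\"{o}lder modulus the theorem requires.
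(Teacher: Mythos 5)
Your proposal is correct and follows essentially the same route as the paper: existence from bounded continuity of the coefficients on $\bar{D}$ via \cref{thm1}, a global Lipschitz bound for the drift giving $\bar{\rho}(u)=Lu$, a H\"{o}lder-$\tfrac12$ bound for the diffusion giving $\rho(u)=c\sqrt{u}$, and then the concavity and divergence checks of \cref{thm2} with $\rho^{2}(u)u^{-1}+\bar{\rho}(u)=c^{2}+Lu$. The only substantive difference is the treatment of $\sigma_{22}$: the paper's chain of inequalities in \cref{eq32} asserts a Lipschitz bound whose constant is not uniform, since the denominator $\sqrt{Z_{1,t}(1-p_{7}Z_{1,t})}+\sqrt{Z_{2,t}(1-p_{7}Z_{2,t})}$ vanishes as both points approach the free surface $z=h$, and only then invokes \cref{lemma2} to pass to H\"{o}lder-$\tfrac12$; you instead identify that degeneracy explicitly and control it with $\left|\sqrt{\alpha}-\sqrt{\beta}\right|\leq\sqrt{\left|\alpha-\beta\right|}$ applied to the Lipschitz radicand, which is the sounder estimate (and in fact makes your $\delta$-splitting of $[a,h]$ unnecessary, since that bound already holds uniformly on the whole interval).
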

\begin{proof}
	The drift and diffusion coefficients for \cref{eq24} are $\bar{u}_{1}(\boldsymbol{\Phi}_{t}) = \frac{u_{*}}{\kappa}\ln \frac{Z_{t}}{z_{0}}$, $\bar{u}_{2}(\boldsymbol{\Phi}_{t}) = -w_{s}+\kappa u_{*} \Sc \left(1-\frac{2Z_{t}}{h}\right)$, $\sigma_{11}(\boldsymbol{\Phi}_{t}) = \sqrt{0.30u_{*}Z_{t}}$, and $\sigma_{22} (\boldsymbol{\Phi}_{t})= \sqrt{2\kappa u_{*}\Sc Z_{t}\left(1-\frac{2Z_{t}}{h}\right)}$. Clearly, all these functions are bounded-continuous on the domain $\bar{D}$. Therefore, according to \cref{thm1}, the RSDPTM \cref{eq24} has a solution. We now need to verify \cref{thm2} in order to establish uniqueness of the solution. For convenience, considering the constants, we may rewrite the coefficients as $\bar{u}_{1} = p_{1}\ln Z_{t}+p_{2}$, $\bar{u}_{2} = p_{3}-p_{4}Z_{t}$, $\sigma_{11} = p_{5} \sqrt{Z_{t}}$, and $\sigma_{22}= p_{6}\sqrt{Z_{t}\left(1-p_{7}Z_{t}\right)}$. Let us consider $\boldsymbol{\Phi}_{1,t} = \left(X_{1,t}, Z_{1,t}\right)$ and $\boldsymbol{\Phi}_{2,t} = \left(X_{2,t}, Z_{2,t}\right)$. Then, 
	\begin{align}\label{eq28}
		\|\bar{u}_{1} \left(\boldsymbol{\Phi}_{1,t}\right) - \bar{u}_{1}\left(\boldsymbol{\Phi}_{2,t}\right)\| & = |p_{1}| \left\|\ln \frac{Z_{1,t}}{Z_{2,t}}\right\| = |p_{1}| \left\|\ln \left(1+\left(\frac{Z_{1,t}}{Z_{2,t}}-1\right)\right)\right\| \leq |p_{1}| \left\|\left(\frac{Z_{1,t}}{Z_{2,t}}-1\right)\right\| \nonumber \\ & = |p_{1}| \frac{\|Z_{1,t}-Z_{2,t}\|}{\|Z_{2,t}\|} \leq \frac{|p_{1}|}{|a|} \|Z_{1,t}-Z_{2,t}\| \nonumber \\ & \leq L \max \left(\|X_{1,t}-X_{2,t}\|,\|Z_{1,t}-Z_{2,t}\|\right) = L \|\boldsymbol{\Phi}_{1,t}-\boldsymbol{\Phi}_{2,t}\|_{\infty}
	\end{align}
	We used the facts that $\ln \left(1+\bullet\right) \leq \bullet~\text{for}~\bullet >-1$ and $a \leq Z_{t} \leq h$. Now, 
\begin{align}\label{eq29}
	\|\bar{u}_{2} \left(\boldsymbol{\Phi}_{1,t}\right) - \bar{u}_{2}\left(\boldsymbol{\Phi}_{2,t}\right)\| = |p_{4}| \|Z_{1,t}-Z_{2,t}\| &\leq L \max \left(\|X_{1,t}-X_{2,t}\|,\|Z_{1,t}-Z_{2,t}\|\right) \nonumber \\ & = L \|\boldsymbol{\Phi}_{1,t}-\boldsymbol{\Phi}_{2,t}\|_{\infty}
\end{align}
It may be noted that we use $L$ as a generic positive constant throughout the paper. This constant may vary from case to case but is independent of $\boldsymbol{\Phi}_{1,t}$ and $\boldsymbol{\Phi}_{2,t}$. In the above, we showed that both $\bar{u}_{1}$ and $\bar{u}_{2}$ are Lipschitz continuous. Since the sum of Lipschitz continuous functions is also Lipschitz continuous, $\boldsymbol{\bar{u}}$ is also Lipschitz continuous and we have
\begin{equation}\label{eq30}
	\|\bar{\boldsymbol{u}}(t,\boldsymbol{\Phi}_{1,t}) - \bar{\boldsymbol{u}}(t,\boldsymbol{\Phi}_{2,t})\| \leq L \|\boldsymbol{\Phi}_{1,t}-\boldsymbol{\Phi}_{2,t}\|_{\infty}
\end{equation}
Now
\begin{align}\label{eq31}
	\left\|\sigma_{11}(\boldsymbol{\Phi}_{1,t}) - \sigma_{11}(\boldsymbol{\Phi}_{2,t})\right\| &=|p_{5}| \left\|\sqrt{Z_{1,t}}-\sqrt{Z_{2,t}}\right\| =|p_{5}| \frac{\|Z_{1,t}-Z_{2,t}\|}{\sqrt{Z_{1,t}} +\sqrt{Z_{2,t}}} \nonumber \\ & = |p_{5}| \sqrt{\left\|Z_{1,t}-Z_{2,t}\right\|} \frac{\sqrt{\|Z_{1,t}-Z_{2,t}\|}}{\sqrt{Z_{1,t}} +\sqrt{Z_{2,t}}} \leq |p_{5}|\sqrt{\left\|Z_{1,t}-Z_{2,t}\right\|} \nonumber \\ & \leq L \max \left(\|X_{1,t}-X_{2,t}\|^{1/2}, \|Z_{1,t}-Z_{2,t}\|^{1/2}\right) = L \left(\|\boldsymbol{\Phi}_{1,t}-\boldsymbol{\Phi}_{2,t}\|_{\infty}\right)^{1/2}
\end{align}
\begin{align}\label{eq32}
	\left\|\sigma_{22}(\boldsymbol{\Phi}_{1,t}) - \sigma_{22}(\boldsymbol{\Phi}_{2,t})\right\| &=  \left\|\sqrt{Z_{1,t}\left(1-p_{7}Z_{1,t}\right)}-\sqrt{Z_{2,t}\left(1-p_{7}Z_{2,t}\right)}\right\| \nonumber \\ & = \frac{\|Z_{1,t}\left(1-p_{7}Z_{1,t}\right)-Z_{2,t}\left(1-p_{7}Z_{2,t}\right)\|}{\sqrt{Z_{1,t}\left(1-p_{7}Z_{1,t}\right)}+\sqrt{Z_{2,t}\left(1-p_{7}Z_{2,t}\right)}} \nonumber \\ & = \frac{\|\left(Z_{1,t}-Z_{2,t}\right) \left[1-p_{7}\left(Z_{1,t}+Z_{2,t}\right)\right]\|}{\sqrt{Z_{1,t}\left(1-p_{7}Z_{1,t}\right)}+\sqrt{Z_{2,t}\left(1-p_{7}Z_{2,t}\right)}} \leq L\left\|Z_{1,t}-Z_{2,t}\right\| \nonumber \\ & \leq L \max \left(\|X_{1,t}-X_{2,t}\|, \|Z_{1,t}-Z_{2,t}\|\right) = L \|\boldsymbol{\Phi}_{1,t}-\boldsymbol{\Phi}_{2,t}\|_{\infty}
\end{align}
Since $\sigma_{22}(\boldsymbol{\Phi}_{1,t})$ is a bounded Lipchitz function on the domain $\bar{D}$, according to \cref{lemma2}, it is 1/2- H\"{o}lder continuous on the domain. Thus, both $\sigma_{11}(\boldsymbol{\Phi}_{1,t})$ and $\sigma_{22}(\boldsymbol{\Phi}_{1,t})$ are 1/2- H\"{o}lder continuous functions. Since the sum of  H\"{o}lder continuous functions is  H\"{o}lder continuous, $\boldsymbol{\sigma}$ is also  H\"{o}lder continuous, and we have
\begin{equation}\label{eq33}
	\|\boldsymbol{\sigma}(t,\boldsymbol{\Phi}_{1,t}) - \boldsymbol{\sigma}(t,\boldsymbol{\Phi}_{2,t})\| \leq L \left(\|\boldsymbol{\Phi}_{1,t}-\boldsymbol{\Phi}_{2,t}\|_{\infty}\right)^{1/2}
\end{equation}
Following \cref{thm2}, from \cref{eq30,eq33}, we have $\rho \left(u\right) = L u^{1/2}$ and $\bar{\rho} \left(u\right) = L u$. Therefore, $		\rho^{2}(u) u^{-1} + \bar{\rho}(u) = L^2+Lu$, which is concave. Further,
\begin{equation}\label{eq34}
	\int_{0^{+}} \left[\rho^{2}(u) u^{-1} + \bar{\rho}(u)\right]^{-1} du = \int_{0^{+}} \frac{du}{L^{2}+Lu} = \infty
\end{equation}
Hence, according to \cref{thm2}, the RSDPTM model \cref{eq24} has a unique solution. 
\end{proof}
\section{Numerical Solution of RSDPTM}
\label{sec:numerical}
In the previous section, we found that the RSDPTM \cref{eq24} has a solution, and it is unique. Since \cref{eq24} is not analytically tractable, we need to approximate it numerically. There are several numerical schemes available in the literature for simulating RSDEs. However, they are limited as compared to ordinary SDEs. The numerical methods for SDEs can be broadly categorized into penalization and projection methods. The penalization method involves constructing solutions to RSDEs by approximating diffusion processes, wherein the reflecting process is substituted with a penalty term, $\beta_{\lambda}(y)$. Convergence to the solution of the RSDE is achieved as $\lambda \downarrow 0$. Several authors have explored this method in relation to its strong and weak order of convergence under different conditions \citep{menaldi1983stochastic,laukajtys2003penalization,liu1995discretization,ding2008splitting}. However, a drawback of penalization methods is the potential for numerical solutions to exit the domain $D$, even when the exact solution to the RSDE does not. Due to this limitation, such methods are not employed in our approach. On the other hand, the projection method approximates the solution to the SDE without reflection. If the numerical solution ventures outside the domain $D$, it is projected back onto the domain. These methods ensure that numerical solutions stay within the desired region \citep{chitashvili1981strong,saisho1987stochastic,lepingle1995euler,slominski1994approximation,slominski1995some,liu1995discretization,pettersson1995approximations}. Therefore, we present an algorithm for obtaining numerical solution to the RSDE utilizing the projection method.
\par 
The projection method is a straightforward extension of the Euler-Maruyama (EM) method described in \cref{subsec:sdptm}. In this method, first, we compute the non-reflected process at time $t+\Delta t$ using the EM algorithm (see \cref{eq7}). If the resulting value falls within the closure $\bar{D}$ of $D$, then the process at the next time step is set to this value. But, if it lies outside $\bar{D}$, then the process is adjusted to be the orthogonal projection of this point onto the boundary of $D$. The orthogonal projection map onto $D$, denoted by $\Pi$, is explained below. Let $\Delta t$ denote the fixed time step. The projection algorithm is as follows:
\begin{itemize}[left=34pt]
	\item[\textbf{Step I:}] Set $t=0$. Input drift and diffusion coefficients $\bar{\boldsymbol{u}}$ and $\boldsymbol{\sigma}$, initial condition $\boldsymbol{\phi}_{0}$, and time step $\Delta t$. \vspace{0.1cm} 
	\item[\textbf{Step II:}] Generate Brownian increments $\boldsymbol{\Delta B}_{p,t}$ for $p=1,2$ as independent Gaussian random variables with mean 0 and variance $\Delta t$. Then, simulate
	\begin{equation}\label{eq35}
		\bar{\boldsymbol{\Phi}}_{t+\Delta t} =\boldsymbol{\Phi}_{t}+ \bar{\boldsymbol{u}}(t,\boldsymbol{\Phi}_{t}) \Delta t + \boldsymbol{\sigma} (t,\boldsymbol{\Phi}_{t}) \Delta \boldsymbol{B}_{t}
	\end{equation}
	at time $t+\Delta t$.\vspace{0.1cm}
	\item[\textbf{Step III:}] If $\bar{\boldsymbol{\Phi}}_{t+\Delta t} \in \bar{D}$ then set $\boldsymbol{\Phi}_{t+\Delta t} = \bar{\boldsymbol{\Phi}}_{t+\Delta t}$. Otherwise, take orthogonal projection $\boldsymbol{\Phi}_{t+\Delta t} = \Pi \left(\bar{\boldsymbol{\Phi}}_{t+\Delta t}\right)$.\vspace{0.1cm}
	\item [\textbf{Step IV:}] Set $t = t+ \Delta t$ and return to step II. 
\end{itemize}
\cref{fig_3} shows how the orthogonal projection occurs in modeling suspended sediment dynamics. It can be noted that the projection operator $\Pi$, for general domains, constructs a minimization problem for finding the projection of a point \citep{dangerfield2012modeling}. However, for the proposed RSDPTM, the projection of a point is the mirror reflection with respect to the boundary. For example, if a point $z=a-\epsilon$ lies outside the bottom boundary $z=a$, then its projection will be at $z=a+\epsilon$. 
\begin{figure}[hbt!]
	\centering
	\includegraphics[scale=0.6]{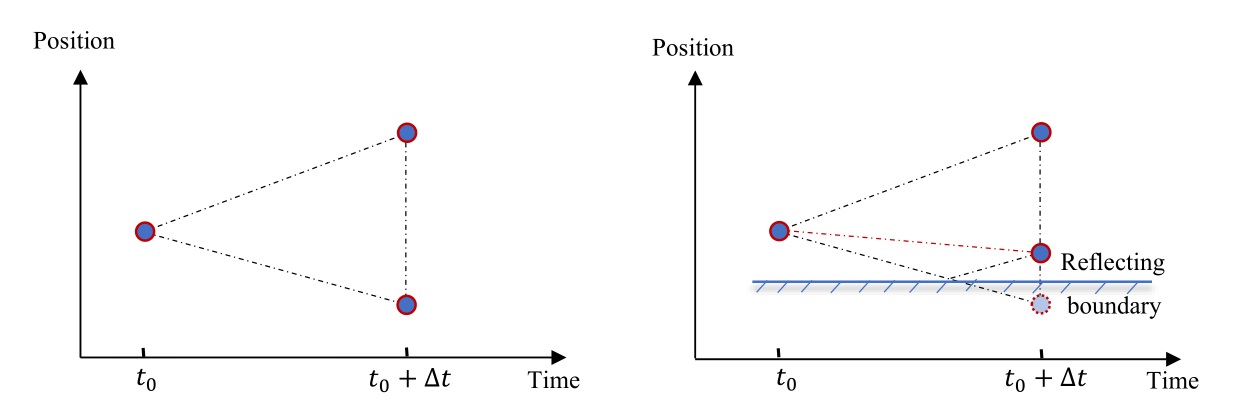}
	\caption{A sketch of the standard SDE and RSDE for particle trajectory. The wall-drift and suppression effects can be recognized in the right-side figure. Here, $t_{0}$ is an arbitrary instant of time, and $\Delta t$ is the time increment.}
	\label{fig_3}
\end{figure}
Now, it is important to mention the convergence of the projected EM method. In the context of SDEs, the notion of strong error is often useful. The order of convergence determines the rate at which a numerical method converges. The existing theoretical convergence of the projection method is summarized below. 
\begin{theorem}\label{thm3}
	\citep{liu1995discretization}.  Suppose that the coefficients $\boldsymbol{\bar{u}}$ and $\boldsymbol{\sigma}$ are Lipschitz continuous and there exists a constant $C_{K}$ such that 
	\begin{equation}\label{eq36}
		\left|\bar{u}_{i}\right| \leq C_{K} \left|\Phi_{i}\right|~~\text{and}~~\left|\sigma_{ii}\right| \leq C_{K} \left|\Phi_{i}\right|~~\forall~~ \Phi \in \bar{D}
	\end{equation}
	Then $\Phi_{t}^{\Delta t}$ defined by the projection scheme discussed in \cref{eq35} converges to $\Phi_{t}$ on $\left[0,T\right]$ in the mean square sense with order one, i.e., 
	\begin{equation}\label{eq37}
		\mathbb{E} \left[\left|\Phi_{T}-\Phi_{T}^{\Delta t}\right|^{2}\right] = \mathcal{O}\left(\Delta t\right)
	\end{equation}
\end{theorem}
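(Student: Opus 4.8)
The plan is to recast the reflected error analysis so that it reduces to the classical Euler--Maruyama estimate, using the convexity of $D$ (established in \cref{lemma1}) to neutralise every reflection contribution. First I would write both processes in integral form on $[0,T]$: the exact solution
\[
\Phi_t = \phi_0 + \int_0^t \bar{u}(s,\Phi_s)\,ds + \int_0^t \sigma(s,\Phi_s)\,dB_s + K_t ,
\]
and, for the scheme \cref{eq35}, its time-continuous interpolation
\[
\Phi_t^{\Delta t} = \phi_0 + \int_0^t \bar{u}(\eta(s),\Phi_{\eta(s)}^{\Delta t})\,ds + \int_0^t \sigma(\eta(s),\Phi_{\eta(s)}^{\Delta t})\,dB_s + K_t^{\Delta t},
\]
where $\eta(s)$ is the largest grid point not exceeding $s$ and $K_t^{\Delta t}$ accumulates the projection corrections from Step III (a pure-jump, finite-variation process supported on the grid). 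Using the structural growth bounds \cref{eq36}, I would first record the moment estimates $\mathbb{E}\big[\sup_{t\le T}|\Phi_t|^2\big]<\infty$ and the analogous bound for $\Phi_t^{\Delta t}$, together with the one-step increment estimate $\mathbb{E}\big[|\Phi_s^{\Delta t}-\Phi_{\eta(s)}^{\Delta t}|^2\big]=\mathcal{O}(\Delta t)$.

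The crux is the separate handling of the two reflection mechanisms. For the numerical reflection I would invoke the non-expansiveness of the Euclidean projection onto a convex set: since the exact solution satisfies $\Phi_{t_{k+1}}\in\bar D$, we have $\Pi(\Phi_{t_{k+1}})=\Phi_{t_{k+1}}$, so at each grid point
\[
|\Phi_{t_{k+1}}-\Phi_{t_{k+1}}^{\Delta t}| = |\Pi(\Phi_{t_{k+1}})-\Pi(\widehat{\Phi}_{k+1})| \le |\Phi_{t_{k+1}}-\widehat{\Phi}_{k+1}|,
\]
where $\widehat{\Phi}_{k+1}$ is the pre-projection Euler update; thus the projection only decreases the error and can be dropped. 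For the continuous reflection of the exact solution I would use the supporting-hyperplane inequality for convex $D$: on $\{\Phi_s\in\partial D\}$ the increment $dK_s$ points along the inward normal $\gamma_s$, and for the comparison point $\Phi_s^{\Delta t}$ this yields $\langle \Phi_s-\Phi_s^{\Delta t},\gamma_s\rangle\le 0$, hence the Itô cross term $\int_0^t\langle e_s,dK_s\rangle\le 0$ with $e_s=\Phi_s-\Phi_s^{\Delta t}$. Both reflection contributions are therefore non-positive and are discarded.

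What remains is exactly the unreflected EM expansion. Applying a discrete Gronwall argument to $\mathbb{E}|e_k|^2$ (or Itô's formula to $|e_t|^2$ in the continuous interpolation), I would bound the drift difference by Lipschitz continuity, the stochastic-integral difference by the Itô isometry (or Burkholder--Davis--Gundy for the running supremum), and the frozen-coefficient discretisation error via the splitting $|\Phi_s-\Phi_{\eta(s)}^{\Delta t}|\le |e_s|+|\Phi_s^{\Delta t}-\Phi_{\eta(s)}^{\Delta t}|$ together with the one-step estimate above. This produces
\[
\mathbb{E}\Big[\sup_{r\le t}|e_r|^2\Big]\le C\,\Delta t + C\int_0^t \mathbb{E}\Big[\sup_{r\le s}|e_r|^2\Big]\,ds,
\]
and Gronwall's lemma gives $\mathbb{E}\big[\sup_{t\le T}|e_t|^2\big]\le C\,\Delta t$, which specialises to \cref{eq37} at $t=T$.

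The hard part will be reconciling the continuous reflection $K_t$ of the exact solution with the discrete, jump-type projection $K_t^{\Delta t}$ within a single step, and in particular controlling the one-step reflection increment $\Delta K$ of the exact process in mean square without losing a power of $\Delta t$. This is precisely where the growth conditions \cref{eq36} are indispensable: by bounding each coefficient by the corresponding coordinate, which vanishes on the relevant face of $\partial D$, they prevent the drift and diffusion from forcing an overshoot larger than the natural Euler increment, thereby keeping the projection corrections of order $\sqrt{\Delta t}$ in $L^2$ and preserving the order-one (mean-square) rate that an unstructured reflection would otherwise destroy.
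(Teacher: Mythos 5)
First, note that the paper itself offers no proof of this statement: \cref{thm3} is quoted verbatim from the cited reference (Liu, 1995), so your attempt can only be judged against the known argument for that result, not against anything in this manuscript.

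Your two sign observations are correct and are indeed the standard opening moves for convex domains: the orthogonal projection onto $\bar{D}$ is non-expansive, and the supporting-hyperplane inequality gives $\langle \Phi_s - y, \gamma_s\rangle \le 0$ for $y\in\bar D$, so both reflection cross terms can be bounded by zero. The genuine gap is that discarding those cross terms does \emph{not} leave you with "exactly the unreflected EM expansion." When you expand $\mathbb{E}\bigl[|\Phi_{t_{k+1}}-\widehat{\Phi}_{k+1}|^2\bigr]$, the local-time increment $\Delta K_k = K_{t_{k+1}}-K_{t_k}$ of the exact solution survives in two places that are not sign-definite: the quadratic term $\mathbb{E}\bigl[|\Delta K_k|^2\bigr]$, and the mismatch between the error evaluated at the step start, $e_k$, and at the actual reflection time $s\in(t_k,t_{k+1}]$ (the hyperplane inequality only kills $\langle \Phi_s-\Phi^{\Delta t}_{t_k},\gamma_s\rangle$, leaving $\int \langle \Phi_{t_k}-\Phi_s,\gamma_s\rangle\,d|K|_s$). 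For a generic reflected diffusion the local time accrued over one step at the boundary is of order $\sqrt{\Delta t}$ in $L^2$, so each of these residuals is only $\mathcal{O}(\Delta t)$ per step; summed over $T/\Delta t$ steps they are $\mathcal{O}(1)$, and the Gronwall inequality you write down never materialises. This is precisely why the generic rate for projection schemes is the weaker $\mathcal{O}(\Delta t\log(1/\Delta t))$ of \cref{eq38}. The structural condition \cref{eq36} is therefore not a safeguard that "keeps the projection corrections of order $\sqrt{\Delta t}$" --- corrections of that size are exactly what ruin the rate --- it is the engine of the whole theorem: because $|\bar u_i|$ and $|\sigma_{ii}|$ are dominated by $|\Phi_i|$, the $i$-th component behaves like a geometric diffusion near the face $\Phi_i=0$, the exact process never charges that face (so $K\equiv 0$ there), and the numerical overshoot past it is controlled quadratically by the distance to the face. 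A complete proof must quantify this (e.g.\ show the per-step boundary contributions are $\mathcal{O}(\Delta t^{2})$, or that the boundary is a.s.\ unattained), which your outline defers rather than supplies. A secondary, fixable flaw: your continuous-time variant $\int_0^t\langle e_s,dK_s\rangle\le 0$ needs the comparison point $\Phi_s^{\Delta t}$ to lie in $\bar D$ for all $s$, but the continuous interpolant of the projected scheme leaves $\bar D$ between grid points, so the inequality must be run against $\Phi^{\Delta t}_{\eta(s)}$ with the resulting gap estimated --- which feeds back into the same problematic $\sqrt{\Delta t}\cdot d|K|$ terms.
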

\par 
For general coefficients $\bar{u}$ and $\sigma$, \cite{pettersson1995approximations} proved the following result that if $\bar{u}$ and $\sigma$ are bounded and Lipschitz continuous functions, then
\begin{equation}\label{eq38}
	\mathbb{E} \left[\sup_{0 \leq t \leq T}\left|\Phi_{t}-\Phi_{t}^{\Delta t}\right|^{2}\right] = \mathcal{O}\left(\Delta t \log \left(1/\Delta t\right)\right)
\end{equation}
holds for small $\Delta t$. 
\section{Results and Discussion}
\label{sec:result}
Here, we first verify the order of convergence of the projection method numerically. Then, the Eulerian (FPE) and Lagrangian (RSDPTM) approaches are compared to check the consistency. Next, the ensemble quantities are calculated and discussed physically. Finally, the proposed model is validated with experimental data through suspended sediment concentration distribution. It may be noted that most of the cases here are considered for one-dimensional (vertical position) configuration, which is more important in the context of vertical suspended sediment concentration distribution. However, extension to the two-dimensional case is straightforward and can be achieved without any difficulty.  
\subsection{Numerical Convergence of RSDPTM}
\label{subsec:numerical_convergence}
We simulate \cref{eq24} in $z$ direction using the projected EM method to test the strong order of convergence numerically. All the parameter values are taken from Run 13 of \cite{coleman1981velocity} data, as can be seen from \cref{table2}. To estimate the strong error, we need to know a reference solution. The reference solution is considered using two ways: one is the projected EM method using a smaller time step, namely $\Delta t=2^{-15} $, and the other is Milstein's method for $\Delta t=2^{-15}$. Milstein's method is a high-resolution numerical method for solving SDEs \citep{kloeden2011numerical}. First, we discretize 10000 Brownian paths using a time step $\delta t = 2^{-15}$, and then solve \cref{eq24} along these paths using projected EM method with time steps $\Delta t = \left\{2^{-5},...,2^{-10}\right\}$. Then, the errors are calculated at the end time $T=1$ sec, considering the time domain $\left[0,1\right]$. To check the order of error, we also plot the reference line with slope 1/2. The slopes of the lines representing the errors are calculated as 0.48 and 0.45 for \cref{fig3a} and \cref{fig3b}, respectively. Thus, it is tested numerically that the projected EM method has a strong order of convergence 1/2.

\begin{table}[ht]
	\scalebox{0.75}{\begin{tabular}{|c|c|c|c|c|c|c|c|}
			\hline
			Data                                                                      & Run & \begin{tabular}[c]{@{}c@{}}Settling velocity\\ $w_{s}$ (m/s)\end{tabular} & \begin{tabular}[c]{@{}c@{}}Flow depth\\ $h$ (m)\end{tabular} & \begin{tabular}[c]{@{}c@{}}Particle diamter\\ $d$ (mm)\end{tabular} & \begin{tabular}[c]{@{}c@{}}Shear velocity\\ $u_{*}$ (m/s)\end{tabular} & \begin{tabular}[c]{@{}c@{}}Reference level\\ $a$ (m)\end{tabular} & \begin{tabular}[c]{@{}c@{}}Schmidt number\\ Sc\end{tabular} \\ \hline
			\multirow{3}{*}{\begin{tabular}[c]{@{}c@{}}Coleman\\ (1986)\end{tabular}} & 3   & 0.007                                                                      & 0.172                                                            & 0.105                                                                   & 0.041                                                                     & $6.020 \times 10^{-3}$                                                               & 0.707                                                           \\ \cline{2-8} 
			& 8   & 0.007                                                                      & 0.173                                                            & 0.105                                                                   & 0.041                                                                      & $6.060 \times 10^{-3}$                                                                 & 0.592                                                           \\ \cline{2-8} 
			& 13   & 0.007                                                                      & 0.171                                                            & 0.105                                                                   & 0.041                                                                      & 5.985 $\times$ $10^{-3}$                                                                 & 0.551                                                           \\ \hline
	\end{tabular}}
	\caption{Experimental conditions of \cite{coleman1981velocity} data.}
	\label{table2}
\end{table}

\begin{figure}[hbt!]
	\centering
	\subfloat[]{\includegraphics[height=7cm,width=8cm]{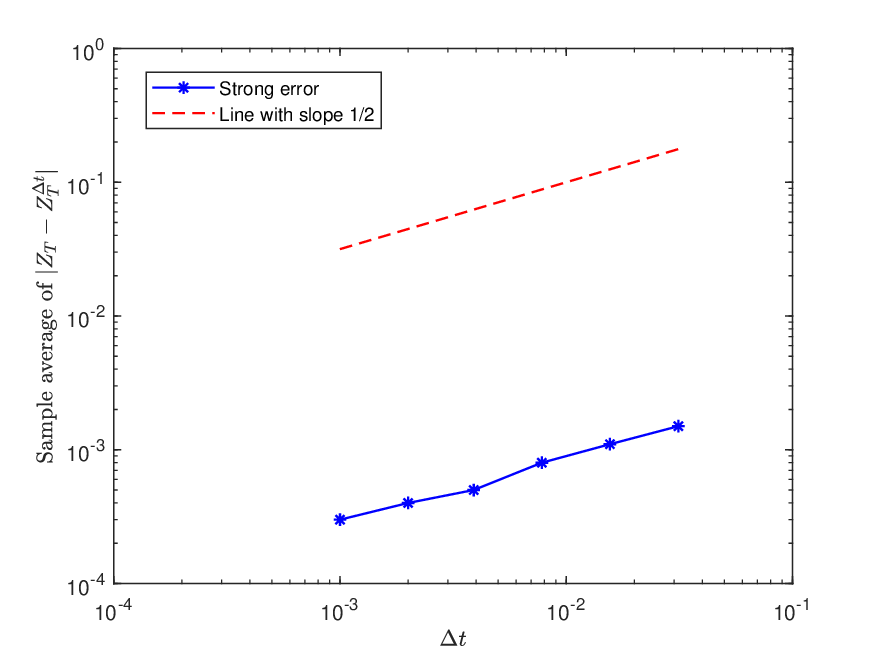}\label{fig3a}}\hfill
	\subfloat[]{\includegraphics[height=7cm,width=8cm]{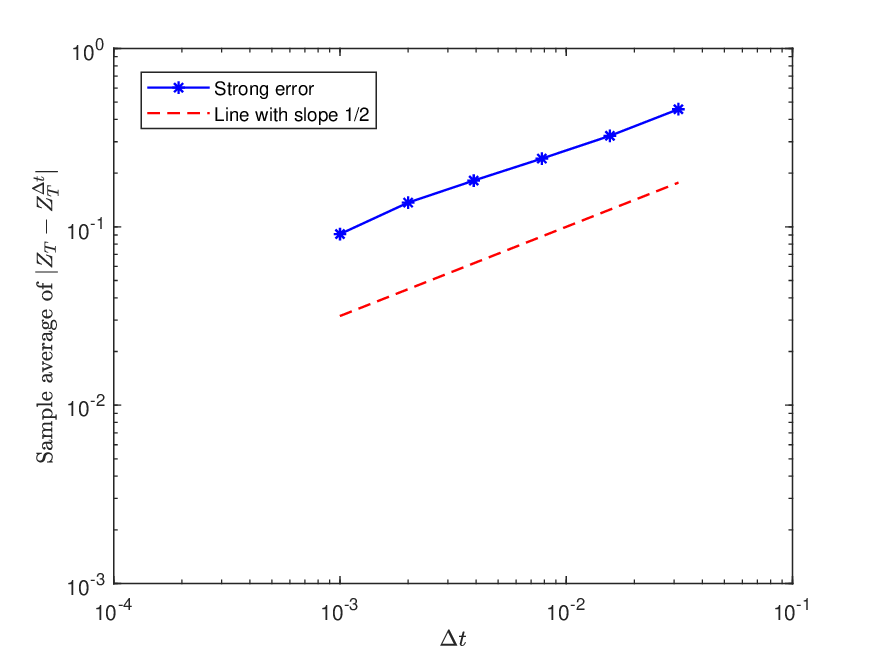}\label{fig3b}}
	\caption{Strong error plots: (a) projected EM with $\Delta t = 2^{-15}$, and (b) Milstein's method with $\Delta t =2^{-15}$ are chosen as the reference solutions.}
	\label{fig3}
\end{figure}

\subsection{Comparison between Euler and Lagrange Frameworks}
\label{subsec:compare euler lagrange}
We derived both the Eulerian and Lagrangian models for the sediment dynamics. The Eulerian approach results in the FPE \cref{eq10} with the boundary conditions \cref{eq15}, and the Lagrangian model derives the RSDPTM \cref{eq24}. Here, we compare the probability density functions (PDFs) obtained from these two approaches considering the one-dimensional case (vertical direction, $z$). All the parameter values are taken from Run 13 of \cite{coleman1981velocity} data and the time interval for the simulation is considered as $\left[0,30\right]$. For RSDPTM, 50000 simulations are performed considering the initial position of the particle at the water surface, and then the histograms are plotted in \cref{fig4a} for different instants of time. On the other hand, FPE \cref{eq10} is solved using the MATLAB toolbox \textit{pdepe}, which uses high-resolution numerical schemes of parabolic PDEs \citep{skeel1990method}. The resulting PDFs are plotted in \cref{fig4b}. It is seen from \cref{fig4} that the two distributions coincide, which indicates that the Eulerian (\cref{eq10}) and Lagrangian (\cref{eq24}) frameworks for developing RSDPTM  are consistent with each other. 

\begin{figure}[hbt!]
	\centering
	\subfloat[]{\includegraphics[height=7cm,width=8.2cm]{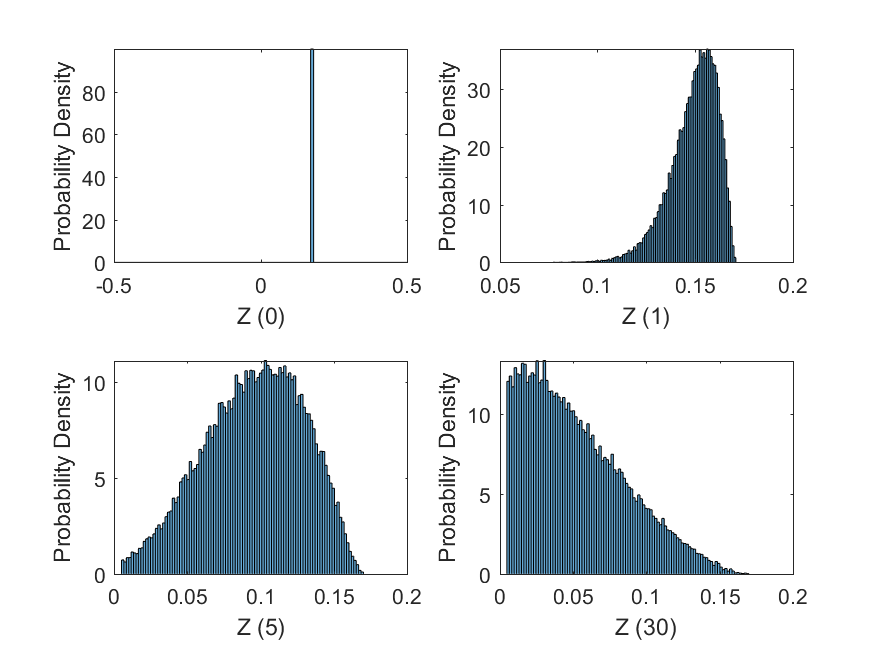}\label{fig4a}}\hfill
	\subfloat[]{\includegraphics[height=7cm,width=8.2cm]{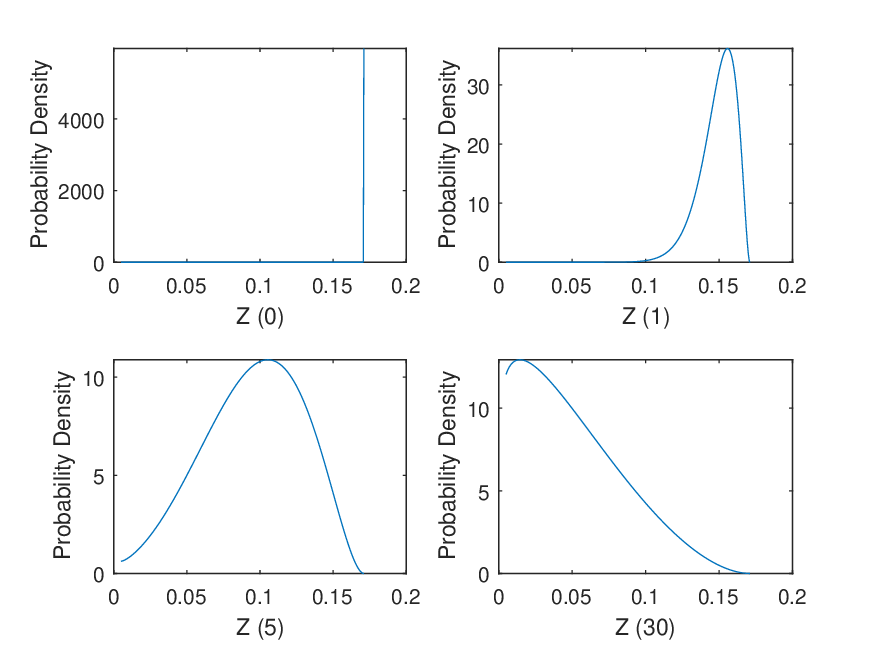}\label{fig4b}}
	\caption{Comparison of PDFs: (a) Lagrange (50K particles), and (b) Euler.}
	\label{fig4}
\end{figure}

\subsection{Sample Trajectories and Ensemble Statistics}
\label{subsec:compare euler lagrange}
The particles can be placed inside the domain whenever they exceed the boundary in several ways. While we used a mathematically proven reflected process, it can also be tackled heuristically. Therefore, it may be interesting to compare the behaviour of the particles' movements using different approaches. For that reason, when a particle exceeds the boundary, apart from the reflected process developed in this work (RSDPTM), we also consider two other cases of tackling the situation, namely (i) placing the particle at the boundary (case I), and (ii) set the current position as previous position, which is inside the domain (case II). The RSDPTM (\cref{eq24}), cases I and II are simulated using 50000 particles, time step $\Delta t=0.01$ sec, and the initial position $(X_{0},Z_{0}) = (0,h)$. Some of the sample trajectories are plotted in \cref{fig5} for each of these cases. It can be observed from the figure that all the techniques constrain the particles to move within the given bounded domain. However, their characteristics are different due to their underlying techniques. Considering the same simulation condition, the ensemble means, and variances of particle trajectories in the streamwise $(x)$ and vertical $(z)$ directions are plotted in \cref{fig6,fig7}, respectively. The ensemble mean and variance in $x$ direction increases with time, as can be seen from \cref{fig6a,fig7a}. In the vertical direction, the particle starts from the water surface $(z=h)$, and the mean gradually decreases over time $t$, which is observed from \cref{fig6b}. In \cref{fig7b}, the variance initially increases with $t$ and then reaches a stable value after some period of time. This happens because the particles tend to settle on the bed due to gravitation settling. Further, \cref{fig6} shows that different approaches, namely RSDPTM, cases I and III, have negligible effect in the ensemble mean of particle trajectories. This behavior suggests that, on average, the particle trajectories are similar. This could be due to the fact that, over many realizations, the particles tend to behave similarly in terms of their average movement. On the other hand, in \cref{fig7}, significant differences are observed for the case of ensemble variance. The differences in variance indicate that the different boundary-handling approaches affect the spread or dispersion of particle trajectories. It may be noted that the orthogonal projection mechanism representing RSDPTM is useful in modeling the boundary-constrained stochastic process; however, it is required to be modified in order to incorporate specific physical characteristics of the flow behavior. To that end, we propose an improved algorithm for sediment particle trajectories in an open channel turbulent flow. 
\begin{figure}[hbt!]
	\centering
	\subfloat[]{\includegraphics[height=4cm,width=5.5cm]{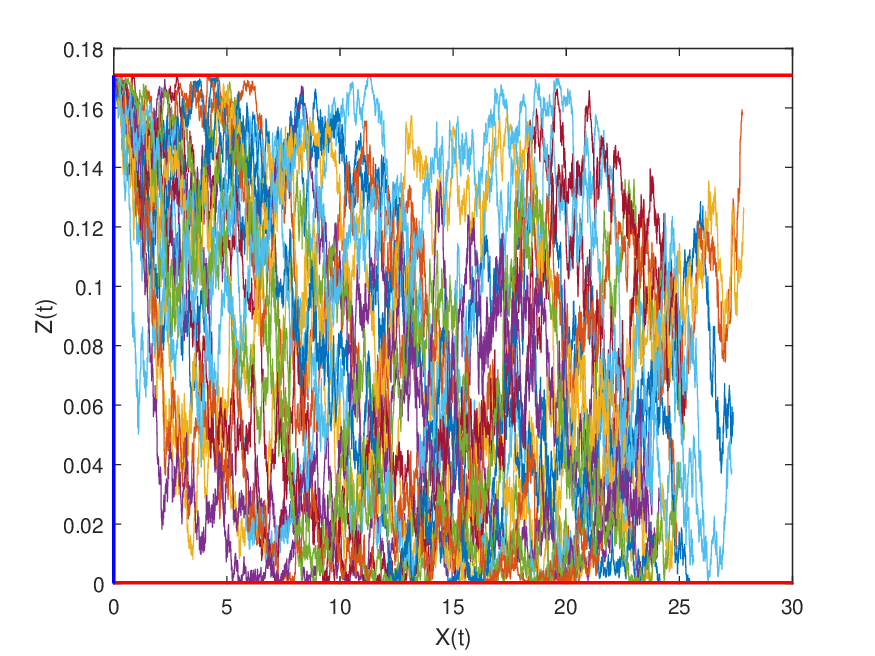}\label{fig5a}}\hfill
	\subfloat[]{\includegraphics[height=4cm,width=5.5cm]{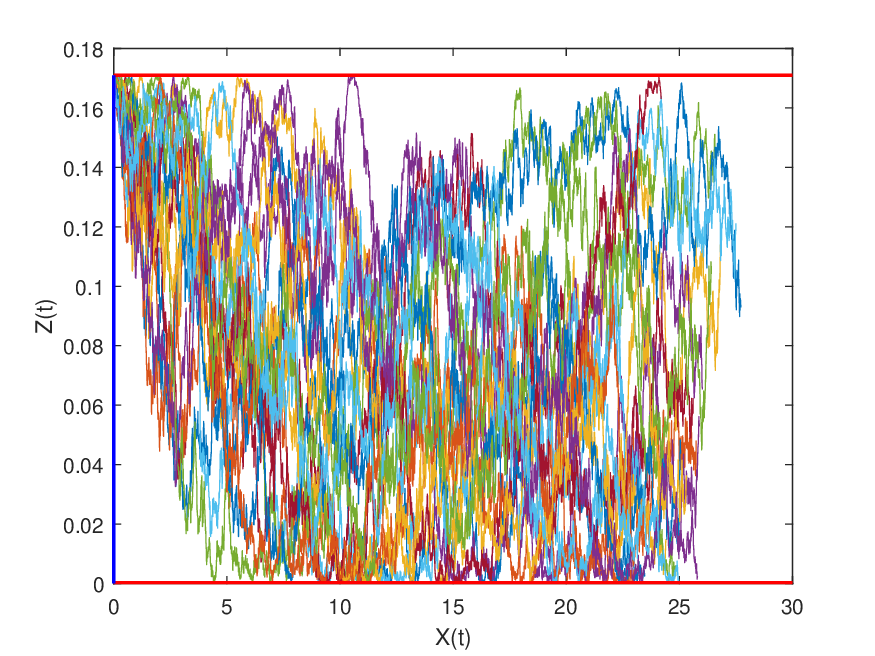}\label{fig5b}}\hfill
	\subfloat[]{\includegraphics[height=4cm,width=5.5cm]{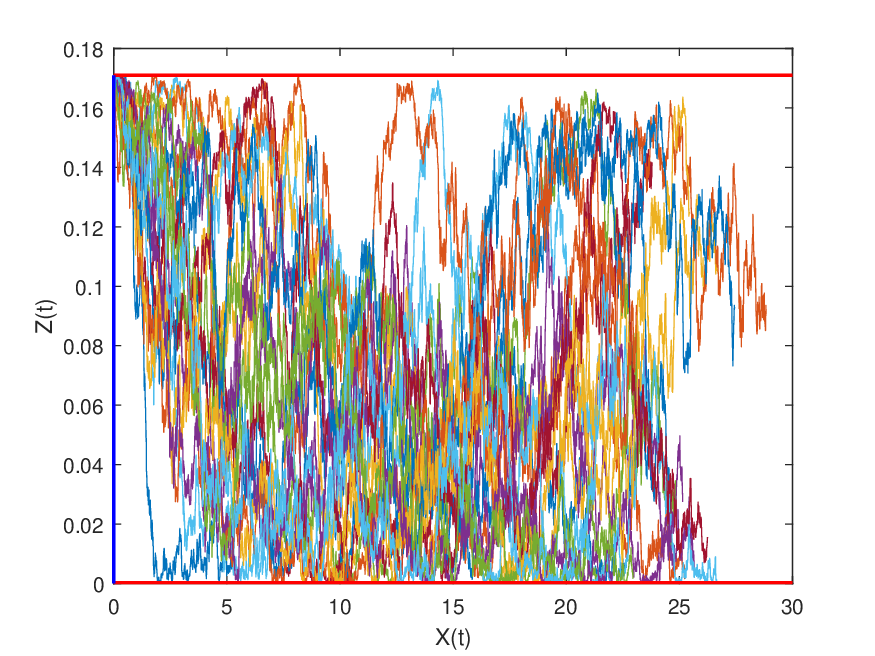}\label{fig5c}}
	\caption{Selected sample trajectories from 50000 particle simulations: (a) RSDPTM, (b) SDPTM (case I), and (c) SDPTM (case II).}
	\label{fig5}
\end{figure}

\begin{figure}[hbt!]
	\centering
	\subfloat[]{\includegraphics[height=6.5cm,width=8.2cm]{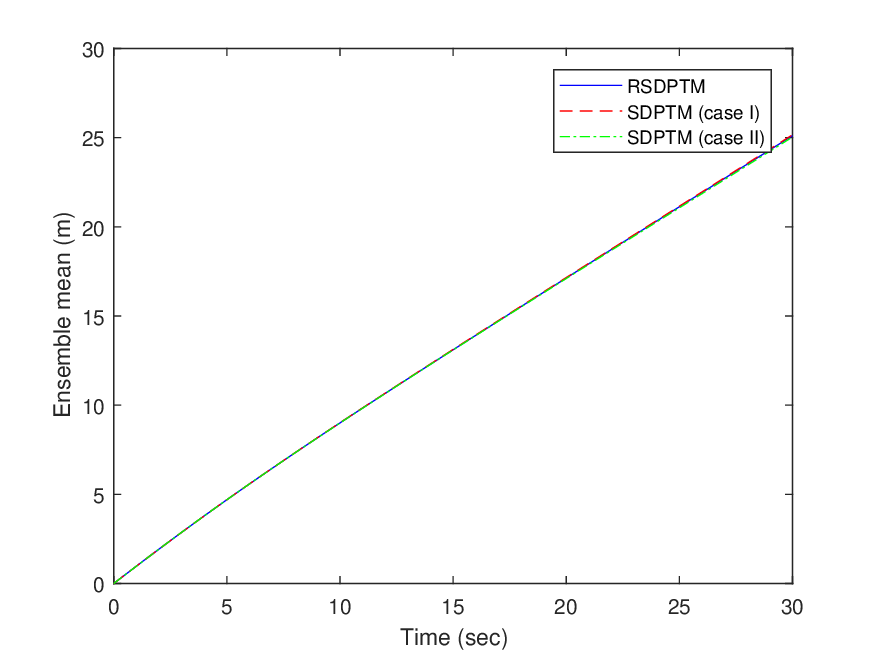}\label{fig6a}}\hfill
	\subfloat[]{\includegraphics[height=6.5cm,width=8.2cm]{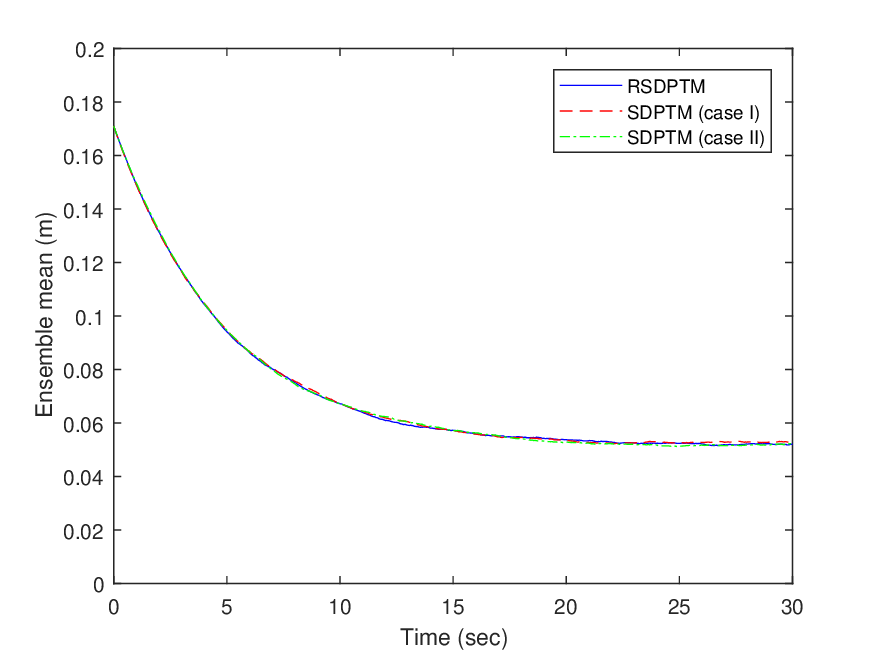}\label{fig6b}}
	\caption{Ensemble means of particle trajectories: (a) streamwise ($x$), and (b) vertical $(z)$ direction.}
	\label{fig6}
\end{figure}

\begin{figure}[hbt!]
	\centering
	\subfloat[]{\includegraphics[height=6.5cm,width=8.2cm]{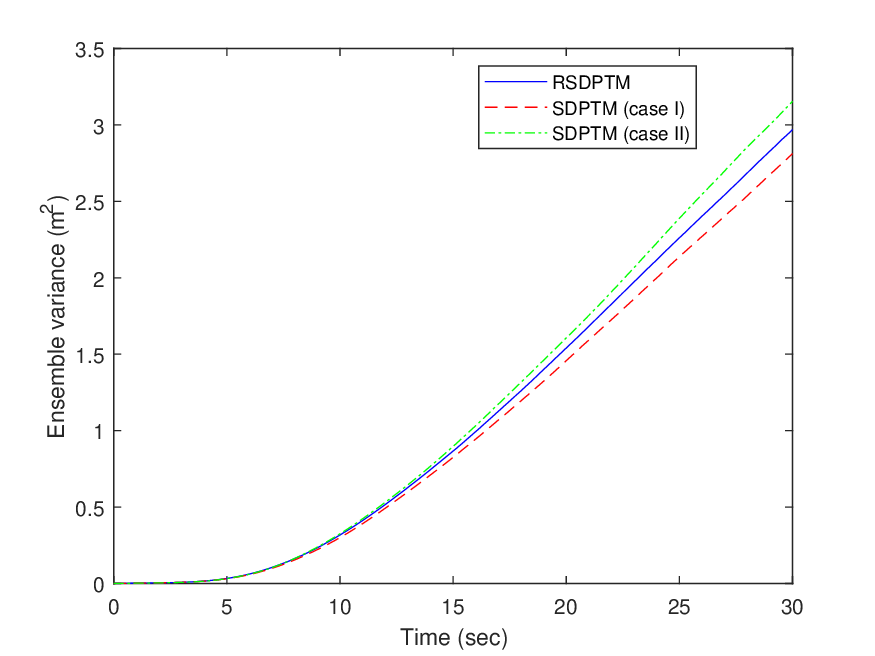}\label{fig7a}}\hfill
	\subfloat[]{\includegraphics[height=6.5cm,width=8.2cm]{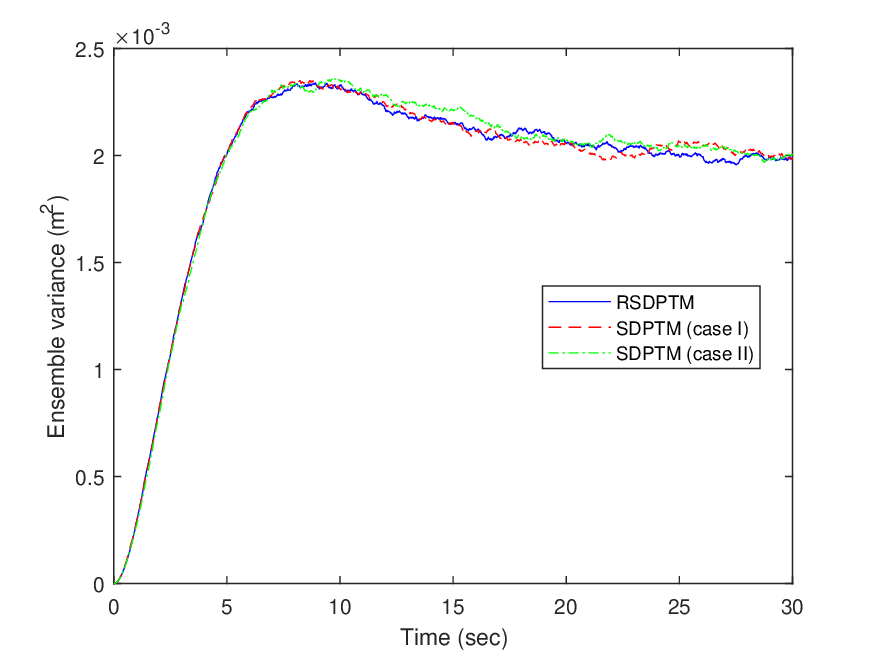}\label{fig7b}}
	\caption{Ensemble variances of particle trajectories: (a) streamwise ($x$), and (b) vertical $(z)$ direction.}
	\label{fig7}
\end{figure}

\subsection{Improved Algorithm Incorporating Resuspension Mechanism}
\label{subsec:improved algorthim}
Sediment particles forming the bed are set in motion under certain threshold conditions. The particles near the bed, which are coarser in nature, move by sliding, rolling, or performing small jumps (saltation). On the other hand, the finer particles are lifted up due to the production of turbulence and its upward diffusion to remain in suspension over an appreciable period of time. In reality, the particles intermittently come into contact with the bed and are shifted through more or less jumps to remain surrounded by the fluid. The threshold of sediment suspension is defined as the flow condition required for the initiation of suspension. Long back, \cite{bagnold1966approach} developed the threshold condition in terms of settling velocity $w_{s}$ and shear velocity $u_{*}$. Following this concept, several researchers formulated the suspension threshold using $w_{s}$ and $u_{*}$ \citep{xie1981river,rijn1984sediment,sumer1986recent,celik1991suspended}. However, these studies are empirical-based and do not consider probabilistic aspects. \cite{cheng1999analysis}  
employed the probabilistic concept for the first time to determine the threshold condition for sediment suspension from bed load. They stated that when the vertical velocity fluctuation exceeds the downward settling velocity, sediment particles come in suspension and define the probability of sediment suspension as $p(w' > w_{s})$. They assumed the velocity fluctuation to follow a Gaussian distribution. Later, \cite{bose2013sediment} suggested that the Gaussian distribution results from adding errors, but turbulent velocity fluctuations do not follow this pattern. They proposed the Gram-Charlier series expansion of the probability densities based on the two-sided exponential or Laplace distribution. The probability distribution was estimated as:
\begin{equation}\label{eq39}
	p\left(w'\right) = \begin{cases}
		\frac{1}{\sqrt{\overline{w'^{2}}}} \left(17+ \hat{w} - \hat{w}^{2}\right) \exp \left(-\hat{w}\right) & \text{if } w' \geq 0 \\
		0 & \text{if } w' < 0 
	\end{cases}
\end{equation}
where $\hat{w} = w' / \sqrt{\overline{w'^{2}}}$, and $\sqrt{\overline{w'^{2}}}$ is the root-mean-square (rms) of $w'$. The rms of $w'$ for hydraulically rough and smooth flow regime can be given as \citep{dey2014fluvial}:
\begin{equation}\label{eq39_rms}
	\sqrt{\overline{w'^{2}}} = \begin{cases}
		u_{*} & \text{for hydraulically rough flow regime}\\
		u_{*}\left(1-\exp\left[-0.025\left(\frac{2.75u_{*}d}{\nu_{f}}\right)^{1.3}\right]\right) & \text{for hydraulically smooth flow regime}
 	\end{cases}
\end{equation}
\cite{bose2013sediment} used \cref{eq39} and the suspension criteria $w' > w_{s}$ to deduce the probability $p(w' > w_{s})$. Following the above discussion, it is important to consider the suspension mechanism in our model to formulate the suspended sediment dynamics stochastic model correctly. Further, the orthogonal projection technique resulting in RSDPTM \cref{eq24} is mostly a mathematical tool and should be supplemented by some additional steps in simulating sediment particle trajectories. To that end, we propose an improved algorithm, where the steps are given in what follows. 
\par
Choosing a proper time step $\Delta t$ and initial values $X_{0}$, $Z_{0}$, first, we simulate the particle trajectories in streamwise ($x$) and vertical ($z$) directions using SDPTM \cref{eq35}. 
\newline
\newline
\textbf{Step 1.} $$ 
X'_{t+\Delta t} = X_{t} + \left(\frac{u_{*}}{\kappa}\ln \frac{Z_{t}}{z_{0}}\right) \Delta t + \sqrt{0.30 u_{*} Z_{t}} dB_{1t}$$
$$Z'_{t+\Delta t} = Z_{t} + \left(-w_{s}+\kappa u_{*} \Sc \left(1-\frac{2Z_{t}}{h}\right)\right) \Delta t + \sqrt{2 \kappa u_{*} \Sc Z_{t}\left(1-\frac{Z_{t}}{h}\right)} dB_{2t} $$
Then, we check whether the particle location remains within or outside the domain, $D=[0,\infty) \times [a,h]$. If it goes beyond the boundary, then the orthogonal projection is applied.\newline
\newline
\textbf{Step 2.} At the same time-step $t+\Delta t$, \begin{equation}\label{eq40}
	X'' = \begin{cases}
-X' & \text{if } X' < 0 \\
		X' & \text{if } X' \geq 0 
	\end{cases},~~~~~~~~ Z'' = \begin{cases}
	2a-Z' & \text{if } Z' < a \\
	2h-Z' & \text{if } Z' > h\\
	Z' & \text{otherwise} 
	\end{cases}
\end{equation}
According to the above equation, the particles are returned to the domain once they exceed the boundary. This follows from a mathematically consistent theory of reflected process. However, regarding the physical mechanism, it should be noted that the particles are exposed to the flow when they enter the domain. Specifically, they are carried away by mean drift velocity in the vertical direction and by the main flow velocity along the streamwise direction. Hence, the following step is added at the same time-step. \newline
\newline
\textbf{Step 3.} At the same time-step $t+\Delta t$, 
\begin{equation}\label{eq41}
	X''' =  X'' + \left(\frac{u_{*}}{\kappa}\ln \frac{Z_{t}}{z_{0}}\right) \Delta t,~~~~Z'''=Z'' + \left(-w_{s}+\kappa u_{*} \Sc \left(1-\frac{2Z_{t}}{h}\right)\right) \Delta t
\end{equation}
The particles in Step 4 may or may not reach the channel bed from where the suspension happens. We place them on the bottom boundary in case they exceed the boundary again. \newline
\newline
\textbf{Step 4.} 
\begin{equation}\label{eq42}
	Z_{t+\Delta t} = \begin{cases}
		a & \text{if } Z'''\leq a\\
		Z''' & \text{otherwise}
	\end{cases}
\end{equation}  
As discussed previously, the particles on the bed are subjected to resuspension mechanism. Therefore, in Step 4, we check the suspension criteria in accordance with \cref{eq39}, and if it is satisfied, then we assign a new elevation length for the re-suspended particle. This entertainment elevation should be taken as a random value subject to specific flow and sediment condition \citep{macdonald2006ptm}. In Eulerian modeling, the Rouse equation is the most widely used suspended sediment concentration distribution profile. This equation can be written as:
\begin{equation}\label{rouse-type}
\frac{C}{C_{a}} = \left(\frac{h-z}{z} \frac{a}{h-a}\right)^{\frac{w_{s}}{\kappa Sc u_{*}}}
\end{equation} 
where $C_{a}$ is the sediment concentration at the reference level $z=a$. We generate Rouse-type random number, say, $R_{\text{Rouse}}$, which follows the characteristics of \cref{rouse-type}. Specifically, this method will generate random numbers that are more biased towards $z=a$ (where sediment concentration is high) and less biased towards $z=h$ (where sediment concentration is low).  \newline
\newline
\textbf{Step 5.} When $Z''' \leq a$ in Step 4 \begin{equation}\label{eq43}
	Z_{t+\Delta t} = \begin{cases}
		R_{\text{Rouse}} & \text{if } w' > w_{s}\\
		a & \text{otherwise}
	\end{cases}
\end{equation}
Steps 1-5 constitute an improved algorithm for tracking sediment particles using RSDPTM. 
\par 
The statistical characterization of the diffusion phenomenon is quantified using the mean-square-displacement (MSD). It can be represented as follows:
\begin{equation}\label{eq44}
	\sigma_{s}^{2} = \left<\left(s_{i}-\left<s\right>\right)^{2}\right> \propto t^{2\gamma}
\end{equation}
where $s$ is the space coordinate, $\left<s\right>$ is the mean of particle locations $s_{i}$, $t$ is time, and $\gamma$ is the scaling diffusion exponent. Classical theories regard the movements of sediment particles to be based on Fick's law, which results in normal or Fickian diffusion. In this case, MSD varies linearly with time, i.e., $\gamma=0.5$. However, recent experimental investigations have reported the case of departure from normal diffusion resulting in $\gamma \neq 0.5$. More specifically, $\gamma < 0.5$ represents the sub-diffusion, $\gamma > 0.5$ denotes the super-diffusion, and $\gamma=2$ indicates the ballistic motion. Anomalous diffusion has been examined for both bed-load particles \citep{bradley2010fractional,fan2016exploring,martin2012physical,nikora2002bed,saletti2015temporal} and suspended particles \citep{afonso2014anomalous,chen2016superdiffusion,nie2017vertical,park2018modeling,schumer2009fractional,tsai2020stochastic}. We consider 50000 particles and plot the MSDs in \cref{fig8} for both $x$ and $z$ directions. The values of the scaling diffusion exponents in specific time periods are reported in \cref{table1}. We aim to examine the movement of sediment particles being deviated from normal or Ficikian diffusion. \cref{fig8a} shows that the diffusion mechanism in streamwise direction follows ``sub-diffusion $\to$ super-diffusion $\to$ super-diffusion (ballistic motion) $\to$ super-diffusion''. On the other hand, the vertical diffusion mechanism is found to follow the sequence ``super-diffusion $\to$ sub-diffusion $\to$ sub-diffusion ($\gamma_{z} \approx 0$)''. It can be observed from \cref{table1} that at the beginning ($t \leq 1$ sec), the particle motions in streamwise and vertical directions are slightly sub- and super-diffusive, respectively. As time increases, particle motion in both directions exhibits non-Fickian or anomalous diffusion. Specifically, in the $x$ direction, particle motion changes from sub- to super-diffusion with increasing intensity. This can attributed to the accelerated and decelerated nature of the motion caused by turbulent eddies, leading to enhanced diffusion. However, in the $z$ direction, the motion changes from super- to sub-diffusion with decreasing intensity. This observation can be ascribed to the movements of particles, encompassing processes such as deposition and resuspension. When sediment particles are introduced at the water surface, they tend to settle towards lower regions in the flow under the influence of gravity. Consequently, the variances in particle positions along the vertical axis decrease as simulation time progresses, leading to subdiffusion. This is attributed to spatial constraints, specifically the boundary effect. After a certain duration, some particles settle on the bed while others continue to be transported. In \cref{fig9}, we plot sediment clouds for 500 realizations at different times, namely $t=5, 10, 20, 30,$ and 50 secs. Initially, the particles are released at the water surface, and then they start to settle. The particle clouds are concentrated initially and become more dispersed as time increases. 
\begin{figure}[hbt!]
	\centering
	\subfloat[]{\includegraphics[height=6.5cm,width=8.2cm]{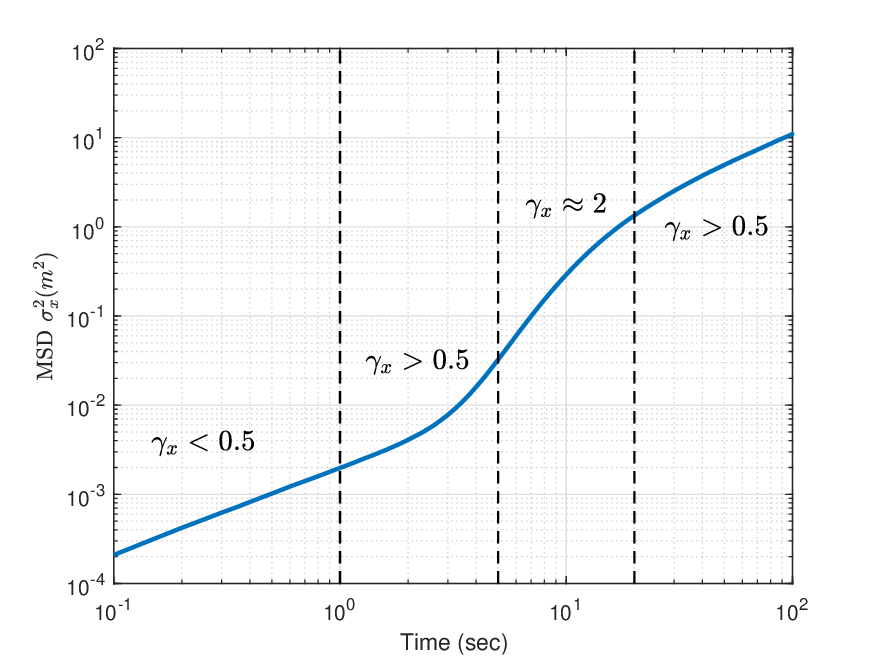}\label{fig8a}}\hfill
	\subfloat[]{\includegraphics[height=6.5cm,width=8.2cm]{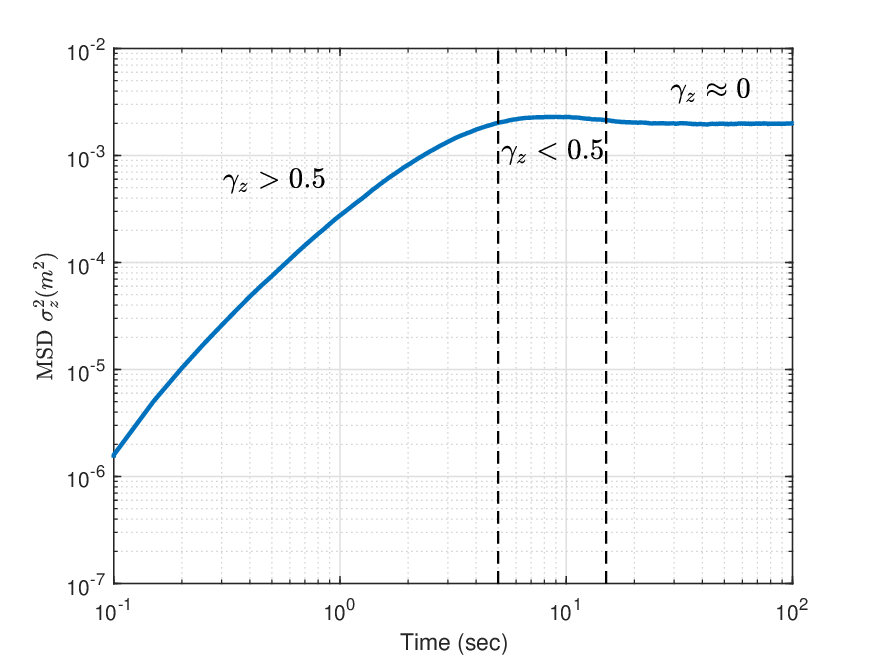}\label{fig8b}}
	\caption{MSD of particle trajectories: (a) streamwise ($x$), and (b) vertical $(z)$ direction.}
	\label{fig8}
\end{figure}

\begin{table}[ht]
	\centering
	\scalebox{0.92}{\begin{tabular*}{\textwidth}{@{\extracolsep{\fill}}ccc}
		\toprule
		\textbf{Time periods (sec)} & \textbf{Exponent $\gamma_{x}$} & \textbf{Exponent $\gamma_{z}$}\\
		\midrule
		0-1 & 0.48 & 0.52\\
		\midrule
		1-5 & 0.66 & 0.78\\
		\midrule
		5-10 & 2.03 & 0.42\\
		\midrule
		10-15 & 2.05 & 0.02\\
		\midrule
		15-50 & 1.14 & 0.00\\
		\midrule
		50-100 & 0.55 & 0.00\\
		\bottomrule
	\end{tabular*}}
	\caption{Scaling diffusion exponents in specific time periods.}
	\label{table1}
\end{table}

\begin{figure}[hbt!]
	\centering
	\includegraphics[height=8cm,width=15cm]{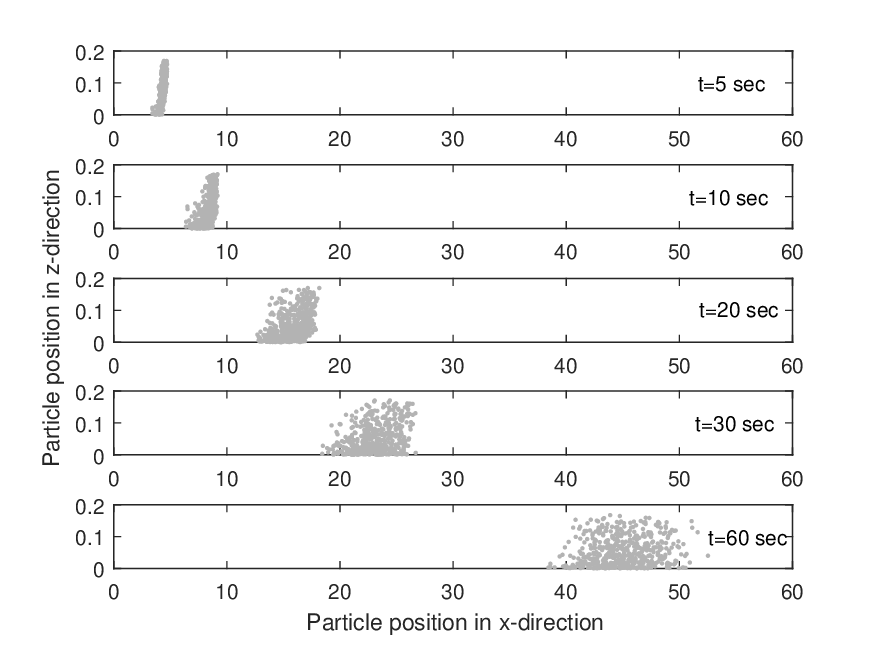}
	\caption{Sediment particle cloud in the flow.}
	\label{fig9}
\end{figure}

\subsection{Model Validation through Experimental Data}
\label{subsec:data validation}
To check the efficiency of the proposed algorithm based on RSDPTM in the previous section, we validate it through the experimental data of suspended sediment concentration (SSC) distribution. The dilute sediment-laden flow data of \cite{coleman1981velocity} are chosen for that purpose. In his study, \cite{coleman1981velocity} utilized a smooth flume with dimensions of 356 mm in width and 15 m in length. The flow depth was consistently maintained at approximately 17.1 cm. Out of the 40 test cases conducted, test cases 1, 21, and 32 were executed under clear-water flow conditions. Conversely, test cases 2–20, 22-31, and 33-40 involved a sediment bed with three distinct sand diameters: $d =$ 0.105 mm, $d =$ 0.21 mm, and $d =$ 0.42 mm.
\par 
The vertical distribution of SSC from the particle trajectories given by the RSDPTM model \cref{eq24} is calculated as follows. First, we divide the flow depths into several bins and then calculate the number of particles in each of the bins. Finally, the SSC is estimated by dividing the number of particles in a bin by the total number of particles (simulations). We choose Runs 3, 8, and 13 from \cite{coleman1981velocity} data to validate the model and report the required parameter values in \cref{table2}. For all the cases, 100K particles, time step $\Delta t=0.01$ sec, and 100 bins are chosen. The normalized SSC vs depth is plotted in \cref{fig10}. Initially, in \cref{fig10a}, we choose three different times, namely $t=$ 15, 20, and 60 sec to compare the SSC model with data. It can be seen that as time increases, the model matches better with the data. This is because the experimental data of \cite{coleman1981velocity} were collected for a steady-uniform flow. Thus, the proposed model performs better when it reaches a sufficiently large time and becomes equilibrium. Overall, an excellent agreement is found between the estimated and observed values of data, as can be seen from \cref{fig10}.  

\begin{figure}[hbt!]
	\centering
	\subfloat[Run 03]{\includegraphics[height=6.5cm,width=8.2cm]{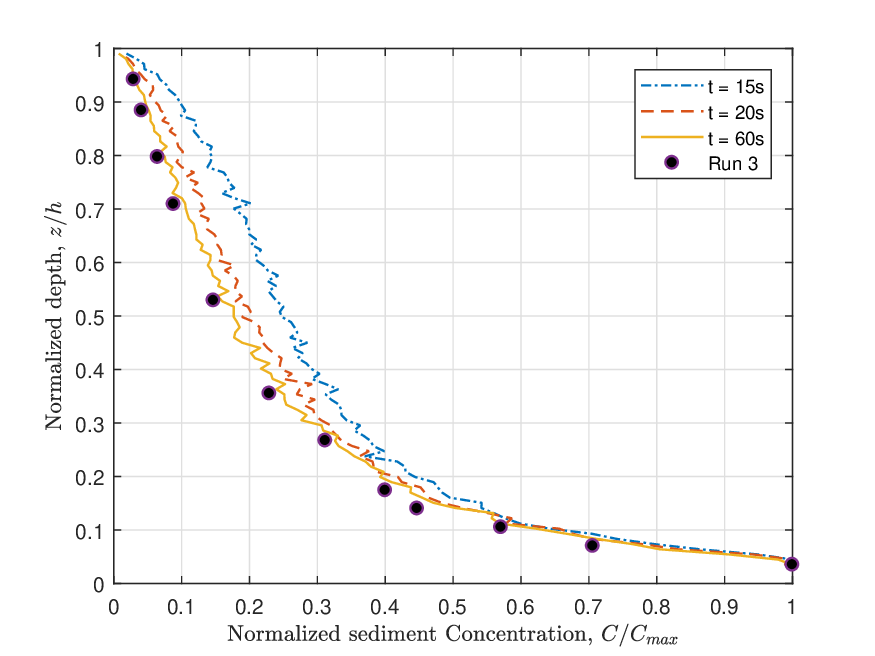}\label{fig10a}}
	\hfill
	\subfloat[Run 08]{\includegraphics[height=6.5cm,width=8.2cm]{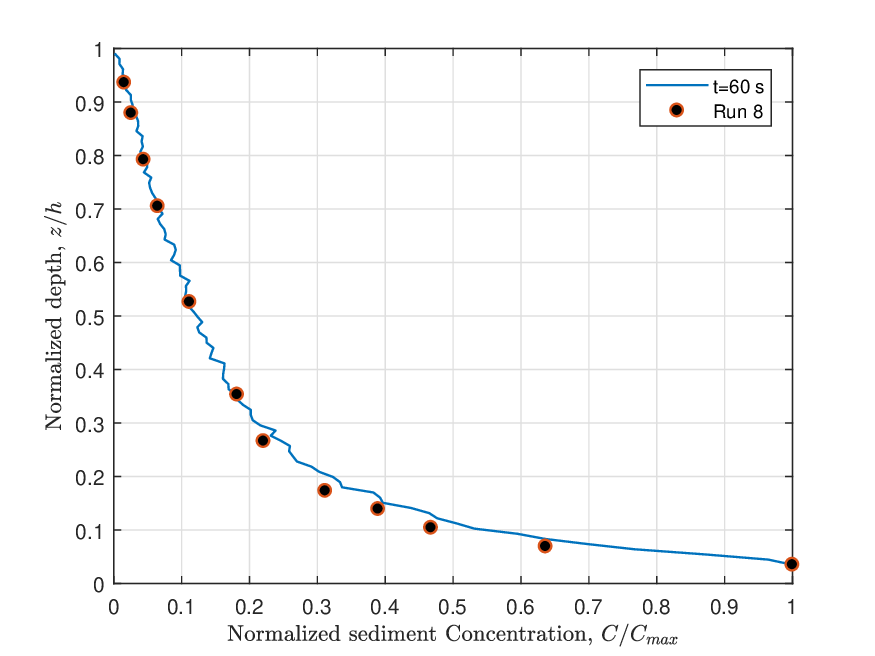}\label{fig10b}}
	
	\vspace{-5pt} 
	
	\subfloat[Run 13]{\includegraphics[height=6.5cm,width=8.2cm]{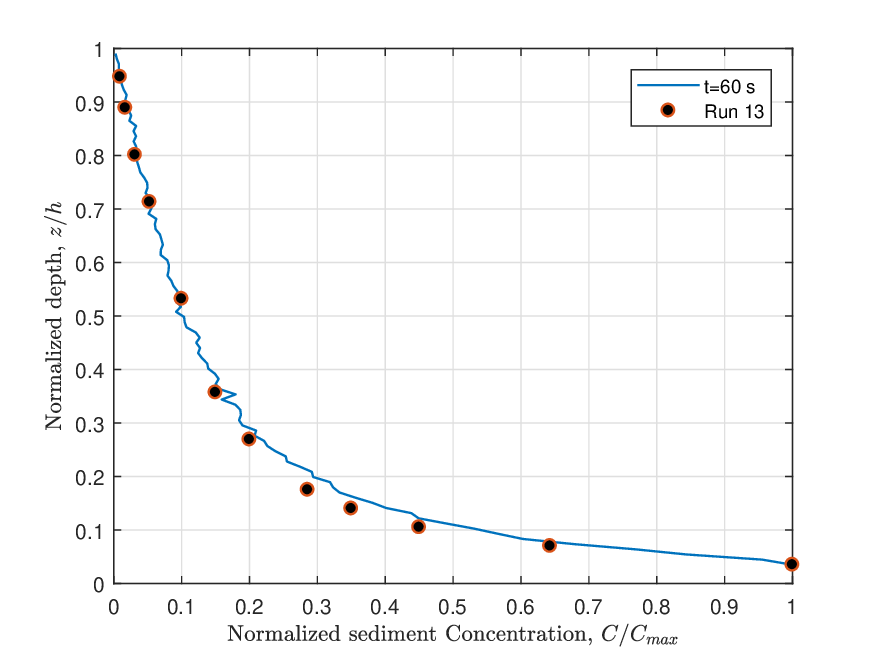}\label{fig10c}}
	
	\caption{Comparison of RSDPTM-based sediment concentration model with (a) Run 3, (b) Run 8, and (c) Run 13 of \cite{coleman1981velocity} data.}
	\label{fig10}
\end{figure}


\section{Conclusions and Future Recommendations}
\label{sec:conclusions}
This work proposes both stochastic Lagrangian and Eulerian models for suspended sediment dynamics in a two-dimensional open-channel turbulent flow. An open channel is bounded vertically by the reference level at the bottom and the water surface at the top and is semi-bounded in the streamwise direction. Therefore, unlike previous studies, the Lagrangian framework derives the reflected SDE incorporating the boundary effects of the flow domain. On the other hand, the Eulerian approach proposes FPE along with consistent boundary conditions. The SDE is simulated numerically using a projected EM method, while the FPE is solved using a Matlab toolbox. 
\par 
Unlike standard SDE, the reflected SDE (RSDE) requires some additional conditions for the existence and uniqueness of the solution. Considering a general mean drift and diffusion term, we prove the existence and uniqueness of the result. This indicates that the SDPTM or other similar stochastic sediment particle motion-based Lagrangian model always has a solution, and it is unique. In general, the numerical solution of RSDEs needs an improved version of the EM method, namely the projected EM method, which operates an orthogonal projection operator to the particles exceeding the boundary of the domain. Thus, the strong/weak order of convergence of the numerical method does not work like standard EM. For the proposed model, the strong order of convergence is checked by considering some numerical tests, and it is seen that the projected EM has a 1/2 strong order of convergence. The comparison between Eulerian (FPE) and Lagrangian (RSDPTM) models is given, which shows their consistency with each other. 
\par 
While the reflected process is a mathematically consistent formulation for a stochastic process in a bounded region, it should be supplemented by some additional steps in order to represent the physical mechanism of suspended sediment dynamics. To that end, the concept of threshold of sediment suspension is introduced, in which the particles may be subjected to resuspension under certain conditions once they reach the channel bed. Incorporating this and some additional steps, an improved algorithm for stochastic sediment particle trajectories is proposed. The mean-squared-displacement (MSD) of particle trajectories in both directions reveal the anomalous diffusion nature of particle motions. Specifically, the motion changes from sub- to super-diffusion and super- to sub-diffusion in streamwise and vertical directions, respectively. Initially, the particles are released at the water surface, and then they start to settle as time increases, which explains a constant variance ($\gamma_{z} \approx 0$) in the vertical direction. Finally, the proposed algorithm is checked for its robustness in application to experimental data validation. The vertical distribution of suspended sediment concentration is calculated from the particle location and then compared with relevant experimental data. An excellent agreement is found between the proposed model and experimental data, which may justify the applicability of the proposed algorithm incorporating RSDPTM. 
\par 
One of the hypotheses of this work is that the sediment particles are light enough to follow the nature of the fluid particles. Therefore, for modeling trajectories of relatively heavier particles, the particle Langevin alone may not be a perfect approach. Indeed, the flow field should be treated as a two-phase medium, and the Lagrangian approach for both position and velocity should be adopted \citep{minier2014guidelines}. Particularly, the velocity field is driven by a specific stochastic process, namely the Ornstein-Uhlenbeck (OU) process \citep{viggiano2020modelling}. Nevertheless, the proposed RSDPTM for handling the confined domain can still be a potential candidate for correctly formulating the stochastic model. Further, recent developments have shown that the turbulence-coherent structures contribute notably to the sediment suspension and transportation mechanism. Turbulent bursting events, specifically ejection and sweep events, are responsible for the suspension and deposition of sediment particles. In future studies, we aim to employ the concept of the bursting phenomenon in the RSDPTM by analyzing the Direct Numerical Simulation (DNS) datasets of turbulence.  

\section*{Acknowledgments} 
The authors acknowledge the National Science and Technology Council of Taiwan for financially supporting this research under Contract Nos. NSTC 111-2221-E-002-057-MY3. The first author thanks Mr. Wei-Min Shen (Research Assistant, Department of Civil Engineering, National Taiwan University, TAIWAN), Dr. Serena Y Hung (Former PhD Student, Department of Civil Engineering, National Taiwan University, TAIWAN) and Dr. Falguni Roy (Assistant Professor, Department of Mathematics, NIT Karnataka Surathkal, INDIA) for their helpful discussions. 
\bibliographystyle{elsarticle-harv} 
\bibliography{cas-refs}

\begin{thebibliography}{71}
\expandafter\ifx\csname natexlab\endcsname\relax\def\natexlab#1{#1}\fi
\providecommand{\url}[1]{\texttt{#1}}
\providecommand{\href}[2]{#2}
\providecommand{\path}[1]{#1}
\providecommand{\DOIprefix}{doi:}
\providecommand{\ArXivprefix}{arXiv:}
\providecommand{\URLprefix}{URL: }
\providecommand{\Pubmedprefix}{pmid:}
\providecommand{\doi}[1]{\href{http://dx.doi.org/#1}{\path{#1}}}
\providecommand{\Pubmed}[1]{\href{pmid:#1}{\path{#1}}}
\providecommand{\bibinfo}[2]{#2}
\ifx\xfnm\relax \def\xfnm[#1]{\unskip,\space#1}\fi
\bibitem[{Afonso(2014)}]{afonso2014anomalous}
\bibinfo{author}{Afonso, M.M.}, \bibinfo{year}{2014}.
\newblock \bibinfo{title}{Anomalous diffusion for inertial particles under
  gravity in parallel flows}.
\newblock \bibinfo{journal}{Physical Review E} \bibinfo{volume}{89},
  \bibinfo{pages}{063021}.
\bibitem[{Alsina et~al.(2009)Alsina, Falchetti and
  Baldock}]{alsina2009measurements}
\bibinfo{author}{Alsina, J.M.}, \bibinfo{author}{Falchetti, S.},
  \bibinfo{author}{Baldock, T.E.}, \bibinfo{year}{2009}.
\newblock \bibinfo{title}{Measurements and modelling of the advection of
  suspended sediment in the swash zone by solitary waves}.
\newblock \bibinfo{journal}{Coastal Engineering} \bibinfo{volume}{56},
  \bibinfo{pages}{621--631}.
\bibitem[{Ancey et~al.(2015)Ancey, Bohorquez and Heyman}]{ancey2015stochastic}
\bibinfo{author}{Ancey, C.}, \bibinfo{author}{Bohorquez, P.},
  \bibinfo{author}{Heyman, J.}, \bibinfo{year}{2015}.
\newblock \bibinfo{title}{Stochastic interpretation of the advection-diffusion
  equation and its relevance to bed load transport}.
\newblock \bibinfo{journal}{Journal of Geophysical Research: Earth Surface}
  \bibinfo{volume}{120}, \bibinfo{pages}{2529--2551}.
\bibitem[{Bagnold(1966)}]{bagnold1966approach}
\bibinfo{author}{Bagnold, R.A.}, \bibinfo{year}{1966}.
\newblock \bibinfo{title}{An approach to the sediment transport problem from
  general physics}.
\newblock \bibinfo{type}{Technical Report}. US Govt. Print. Off.,.
\bibitem[{Bayer et~al.(2010)Bayer, Szepessy and Tempone}]{bayer2010adaptive}
\bibinfo{author}{Bayer, C.}, \bibinfo{author}{Szepessy, A.},
  \bibinfo{author}{Tempone, R.}, \bibinfo{year}{2010}.
\newblock \bibinfo{title}{Adaptive weak approximation of reflected and stopped
  diffusions}.
\newblock \bibinfo{journal}{Monte Carlo Methods and Applications}
  \bibinfo{volume}{16}, \bibinfo{pages}{1--67}.
\bibitem[{Bose and Dey(2013)}]{bose2013sediment}
\bibinfo{author}{Bose, S.K.}, \bibinfo{author}{Dey, S.}, \bibinfo{year}{2013}.
\newblock \bibinfo{title}{Sediment entrainment probability and threshold of
  sediment suspension: exponential-based approach}.
\newblock \bibinfo{journal}{Journal of Hydraulic Engineering}
  \bibinfo{volume}{139}, \bibinfo{pages}{1099--1106}.
\bibitem[{Bradley et~al.(2010)Bradley, Tucker and
  Benson}]{bradley2010fractional}
\bibinfo{author}{Bradley, D.N.}, \bibinfo{author}{Tucker, G.E.},
  \bibinfo{author}{Benson, D.A.}, \bibinfo{year}{2010}.
\newblock \bibinfo{title}{Fractional dispersion in a sand bed river}.
\newblock \bibinfo{journal}{Journal of Geophysical Research: Earth Surface}
  \bibinfo{volume}{115}.
\bibitem[{Brillinger(2003)}]{brillinger2003simulating}
\bibinfo{author}{Brillinger, D.R.}, \bibinfo{year}{2003}.
\newblock \bibinfo{title}{Simulating constrained animal motion using stochastic
  differential equations}.
\newblock \bibinfo{journal}{Lecture Notes-Monograph Series} ,
  \bibinfo{pages}{35--48}.
\bibitem[{Celik and Rodi(1991)}]{celik1991suspended}
\bibinfo{author}{Celik, I.}, \bibinfo{author}{Rodi, W.}, \bibinfo{year}{1991}.
\newblock \bibinfo{title}{Suspended sediment-transport capacity for open
  channel flow}.
\newblock \bibinfo{journal}{Journal of Hydraulic Engineering}
  \bibinfo{volume}{117}, \bibinfo{pages}{191--204}.
\bibitem[{Chen et~al.(2016)Chen, Wei, Sheng and Tsao}]{chen2016superdiffusion}
\bibinfo{author}{Chen, Y.F.}, \bibinfo{author}{Wei, H.H.},
  \bibinfo{author}{Sheng, Y.J.}, \bibinfo{author}{Tsao, H.K.},
  \bibinfo{year}{2016}.
\newblock \bibinfo{title}{Superdiffusion in dispersions of active colloids
  driven by an external field and their sedimentation equilibrium}.
\newblock \bibinfo{journal}{Physical Review E} \bibinfo{volume}{93},
  \bibinfo{pages}{042611}.
\bibitem[{Cheng(1997)}]{cheng1997simplified}
\bibinfo{author}{Cheng, N.S.}, \bibinfo{year}{1997}.
\newblock \bibinfo{title}{Simplified settling velocity formula for sediment
  particle}.
\newblock \bibinfo{journal}{Journal of Hydraulic Engineering}
  \bibinfo{volume}{123}, \bibinfo{pages}{149--152}.
\bibitem[{Cheng and Chiew(1999)}]{cheng1999analysis}
\bibinfo{author}{Cheng, N.S.}, \bibinfo{author}{Chiew, Y.M.},
  \bibinfo{year}{1999}.
\newblock \bibinfo{title}{Analysis of initiation of sediment suspension from
  bed load}.
\newblock \bibinfo{journal}{Journal of Hydraulic Engineering}
  \bibinfo{volume}{125}, \bibinfo{pages}{855--861}.
\bibitem[{Chitashvili and Lazrieva(1981)}]{chitashvili1981strong}
\bibinfo{author}{Chitashvili, R.}, \bibinfo{author}{Lazrieva, N.},
  \bibinfo{year}{1981}.
\newblock \bibinfo{title}{Strong solutions of stochastic differential equations
  with boundary conditions}.
\newblock \bibinfo{journal}{Stochastics: An International Journal of
  Probability and Stochastic Processes} \bibinfo{volume}{5},
  \bibinfo{pages}{255--309}.
\bibitem[{Coleman(1981)}]{coleman1981velocity}
\bibinfo{author}{Coleman, N.L.}, \bibinfo{year}{1981}.
\newblock \bibinfo{title}{Velocity profiles with suspended sediment}.
\newblock \bibinfo{journal}{Journal of Hydraulic Research}
  \bibinfo{volume}{19}, \bibinfo{pages}{211--229}.
\bibitem[{Dangerfield et~al.(2010)Dangerfield, Kay and
  Burrage}]{dangerfield2010stochastic}
\bibinfo{author}{Dangerfield, C.}, \bibinfo{author}{Kay, D.},
  \bibinfo{author}{Burrage, K.}, \bibinfo{year}{2010}.
\newblock \bibinfo{title}{Stochastic models and simulation of ion channel
  dynamics}.
\newblock \bibinfo{journal}{Procedia Computer Science} \bibinfo{volume}{1},
  \bibinfo{pages}{1587--1596}.
\bibitem[{Dangerfield et~al.(2012a)Dangerfield, Kay and
  Burrage}]{dangerfield2012modeling}
\bibinfo{author}{Dangerfield, C.E.}, \bibinfo{author}{Kay, D.},
  \bibinfo{author}{Burrage, K.}, \bibinfo{year}{2012}a.
\newblock \bibinfo{title}{Modeling ion channel dynamics through reflected
  stochastic differential equations}.
\newblock \bibinfo{journal}{Physical Review E} \bibinfo{volume}{85},
  \bibinfo{pages}{051907}.
\bibitem[{Dangerfield et~al.(2012b)Dangerfield, Kay, Macnamara and
  Burrage}]{dangerfield2012boundary}
\bibinfo{author}{Dangerfield, C.E.}, \bibinfo{author}{Kay, D.},
  \bibinfo{author}{Macnamara, S.}, \bibinfo{author}{Burrage, K.},
  \bibinfo{year}{2012}b.
\newblock \bibinfo{title}{A boundary preserving numerical algorithm for the
  wright-fisher model with mutation}.
\newblock \bibinfo{journal}{BIT Numerical Mathematics} \bibinfo{volume}{52},
  \bibinfo{pages}{283--304}.
\bibitem[{Dey(2014)}]{dey2014fluvial}
\bibinfo{author}{Dey, S.}, \bibinfo{year}{2014}.
\newblock \bibinfo{title}{Fluvial hydrodynamics}.
\newblock \bibinfo{publisher}{Springer}.
\bibitem[{Dimou and Adams(1993)}]{dimou1993random}
\bibinfo{author}{Dimou, K.N.}, \bibinfo{author}{Adams, E.E.},
  \bibinfo{year}{1993}.
\newblock \bibinfo{title}{A random-walk, particle tracking model for well-mixed
  estuaries and coastal waters}.
\newblock \bibinfo{journal}{Estuarine, Coastal and Shelf Science}
  \bibinfo{volume}{37}, \bibinfo{pages}{99--110}.
\bibitem[{Ding and Zhang(2008)}]{ding2008splitting}
\bibinfo{author}{Ding, D.}, \bibinfo{author}{Zhang, Y.Y.},
  \bibinfo{year}{2008}.
\newblock \bibinfo{title}{A splitting-step algorithm for reflected stochastic
  differential equations in r+ 1}.
\newblock \bibinfo{journal}{Computers \& Mathematics with Applications}
  \bibinfo{volume}{55}, \bibinfo{pages}{2413--2425}.
\bibitem[{Fan et~al.(2016)Fan, Singh, Guala, Foufoula-Georgiou and
  Wu}]{fan2016exploring}
\bibinfo{author}{Fan, N.}, \bibinfo{author}{Singh, A.}, \bibinfo{author}{Guala,
  M.}, \bibinfo{author}{Foufoula-Georgiou, E.}, \bibinfo{author}{Wu, B.},
  \bibinfo{year}{2016}.
\newblock \bibinfo{title}{Exploring a semimechanistic episodic langevin model
  for bed load transport: Emergence of normal and anomalous advection and
  diffusion regimes}.
\newblock \bibinfo{journal}{Water Resources Research} \bibinfo{volume}{52},
  \bibinfo{pages}{2789--2801}.
\bibitem[{Ferm et~al.(2006)Ferm, L{\"o}tstedt and
  Sj{\"o}berg}]{ferm2006conservative}
\bibinfo{author}{Ferm, L.}, \bibinfo{author}{L{\"o}tstedt, P.},
  \bibinfo{author}{Sj{\"o}berg, P.}, \bibinfo{year}{2006}.
\newblock \bibinfo{title}{Conservative solution of the fokker--planck equation
  for stochastic chemical reactions}.
\newblock \bibinfo{journal}{BIT Numerical Mathematics} \bibinfo{volume}{46},
  \bibinfo{pages}{61--83}.
\bibitem[{Fischer(1979)}]{fischer1979mixing}
\bibinfo{author}{Fischer, H.B.}, \bibinfo{year}{1979}.
\newblock \bibinfo{title}{Mixing in inland and coastal waters}.
\newblock \bibinfo{publisher}{Academic Press}.
\bibitem[{Fox(1997)}]{fox1997stochastic}
\bibinfo{author}{Fox, R.F.}, \bibinfo{year}{1997}.
\newblock \bibinfo{title}{Stochastic versions of the hodgkin-huxley equations}.
\newblock \bibinfo{journal}{Biophysical Journal} \bibinfo{volume}{72},
  \bibinfo{pages}{2068--2074}.
\bibitem[{Gardiner et~al.(1985)}]{gardiner1985handbook}
\bibinfo{author}{Gardiner, C.W.}, et~al., \bibinfo{year}{1985}.
\newblock \bibinfo{title}{Handbook of stochastic methods}.
  volume~\bibinfo{volume}{3}.
\newblock \bibinfo{publisher}{springer Berlin}.
\bibitem[{Goldwyn et~al.(2011)Goldwyn, Imennov, Famulare and
  Shea-Brown}]{goldwyn2011stochastic}
\bibinfo{author}{Goldwyn, J.H.}, \bibinfo{author}{Imennov, N.S.},
  \bibinfo{author}{Famulare, M.}, \bibinfo{author}{Shea-Brown, E.},
  \bibinfo{year}{2011}.
\newblock \bibinfo{title}{Stochastic differential equation models for ion
  channel noise in hodgkin-huxley neurons}.
\newblock \bibinfo{journal}{Physical Review E} \bibinfo{volume}{83},
  \bibinfo{pages}{041908}.
\bibitem[{Graf and Cellino(2002)}]{graf2002suspension}
\bibinfo{author}{Graf, W.}, \bibinfo{author}{Cellino, M.},
  \bibinfo{year}{2002}.
\newblock \bibinfo{title}{Suspension flows in open channels; experimental
  study}.
\newblock \bibinfo{journal}{Journal of Hydraulic Research}
  \bibinfo{volume}{40}, \bibinfo{pages}{435--447}.
\bibitem[{Higham(2001)}]{higham2001algorithmic}
\bibinfo{author}{Higham, D.J.}, \bibinfo{year}{2001}.
\newblock \bibinfo{title}{An algorithmic introduction to numerical simulation
  of stochastic differential equations}.
\newblock \bibinfo{journal}{SIAM Review} \bibinfo{volume}{43},
  \bibinfo{pages}{525--546}.
\bibitem[{Huang and Tsai(2023)}]{huang2023modeling}
\bibinfo{author}{Huang, Y.Y.}, \bibinfo{author}{Tsai, C.W.},
  \bibinfo{year}{2023}.
\newblock \bibinfo{title}{Modeling suspended sediment transport under the
  influence of attached eddies in turbulent flows}.
\newblock \bibinfo{journal}{Stochastic Environmental Research and Risk
  Assessment} \bibinfo{volume}{37}, \bibinfo{pages}{3451--3467}.
\bibitem[{Kawamura and Saisho(2006)}]{kawamura2006stochastic}
\bibinfo{author}{Kawamura, T.}, \bibinfo{author}{Saisho, Y.},
  \bibinfo{year}{2006}.
\newblock \bibinfo{title}{Stochastic models describing human metabolic
  processes using sdes with reflection}.
\newblock \bibinfo{journal}{Stochastic Models} \bibinfo{volume}{22},
  \bibinfo{pages}{273--287}.
\bibitem[{Kloeden and Platen(2011)}]{kloeden2011numerical}
\bibinfo{author}{Kloeden, P.E.}, \bibinfo{author}{Platen, E.},
  \bibinfo{year}{2011}.
\newblock \bibinfo{title}{Numerical Solution of Stochastic Differential
  Equations}. volume~\bibinfo{volume}{23}.
\newblock \bibinfo{publisher}{Springer Science \& Business Media}.
\bibitem[{{\L}aukajtys and S{\l}omi{\'n}ski(2003)}]{laukajtys2003penalization}
\bibinfo{author}{{\L}aukajtys, W.}, \bibinfo{author}{S{\l}omi{\'n}ski, L.},
  \bibinfo{year}{2003}.
\newblock \bibinfo{title}{Penalization methods for reflecting stochastic
  differential equations with jumps}.
\newblock \bibinfo{journal}{Stochastics and Stochastic Reports}
  \bibinfo{volume}{75}, \bibinfo{pages}{275--293}.
\bibitem[{Lepingle(1995)}]{lepingle1995euler}
\bibinfo{author}{Lepingle, D.}, \bibinfo{year}{1995}.
\newblock \bibinfo{title}{Euler scheme for reflected stochastic differential
  equations}.
\newblock \bibinfo{journal}{Mathematics and Computers in Simulation}
  \bibinfo{volume}{38}, \bibinfo{pages}{119--126}.
\bibitem[{Liu et~al.(2021)Liu, Tsai and Huang}]{liu2021development}
\bibinfo{author}{Liu, C.C.H.}, \bibinfo{author}{Tsai, C.W.},
  \bibinfo{author}{Huang, Y.Y.}, \bibinfo{year}{2021}.
\newblock \bibinfo{title}{Development of a backward--forward stochastic
  particle tracking model for identification of probable sedimentation sources
  in open channel flow}.
\newblock \bibinfo{journal}{Mathematics} \bibinfo{volume}{9},
  \bibinfo{pages}{1263}.
\bibitem[{Liu(1995)}]{liu1995discretization}
\bibinfo{author}{Liu, Y.}, \bibinfo{year}{1995}.
\newblock \bibinfo{title}{Discretization of a class of reflected diffusion
  processes}.
\newblock \bibinfo{journal}{Mathematics and Computers in Simulation}
  \bibinfo{volume}{38}, \bibinfo{pages}{103--108}.
\bibitem[{Lord et~al.(2010)Lord, Koekkoek and Dijk}]{lord2010comparison}
\bibinfo{author}{Lord, R.}, \bibinfo{author}{Koekkoek, R.},
  \bibinfo{author}{Dijk, D.V.}, \bibinfo{year}{2010}.
\newblock \bibinfo{title}{A comparison of biased simulation schemes for
  stochastic volatility models}.
\newblock \bibinfo{journal}{Quantitative Finance} \bibinfo{volume}{10},
  \bibinfo{pages}{177--194}.
\bibitem[{MacDonald et~al.(2006)MacDonald, Davies, Zundel, Howlett, Lackey,
  Demirbilek and Gailani}]{macdonald2006ptm}
\bibinfo{author}{MacDonald, N.}, \bibinfo{author}{Davies, M.},
  \bibinfo{author}{Zundel, A.}, \bibinfo{author}{Howlett, J.},
  \bibinfo{author}{Lackey, T.}, \bibinfo{author}{Demirbilek, Z.},
  \bibinfo{author}{Gailani, J.}, \bibinfo{year}{2006}.
\newblock \bibinfo{title}{PTM: Particle tracking Model; Report 1: Model theory,
  implementation, and example applications. ERDC}.
\newblock \bibinfo{type}{Technical Report}. CHL TR-06-20, US Army Engineer
  Research and Development Center: Vicksburg, MS.
\bibitem[{Man and Tsai(2007)}]{man2007stochastic}
\bibinfo{author}{Man, C.}, \bibinfo{author}{Tsai, C.W.}, \bibinfo{year}{2007}.
\newblock \bibinfo{title}{Stochastic partial differential equation-based model
  for suspended sediment transport in surface water flows}.
\newblock \bibinfo{journal}{Journal of Engineering Mechanics}
  \bibinfo{volume}{133}, \bibinfo{pages}{422--430}.
\bibitem[{Martin et~al.(2012)Martin, Jerolmack and
  Schumer}]{martin2012physical}
\bibinfo{author}{Martin, R.L.}, \bibinfo{author}{Jerolmack, D.J.},
  \bibinfo{author}{Schumer, R.}, \bibinfo{year}{2012}.
\newblock \bibinfo{title}{The physical basis for anomalous diffusion in bed
  load transport}.
\newblock \bibinfo{journal}{Journal of Geophysical Research: Earth Surface}
  \bibinfo{volume}{117}.
\bibitem[{Menaldi(1983)}]{menaldi1983stochastic}
\bibinfo{author}{Menaldi, J.L.}, \bibinfo{year}{1983}.
\newblock \bibinfo{title}{Stochastic variational inequality for reflected
  diffusion}.
\newblock \bibinfo{journal}{Indiana University Mathematics Journal}
  \bibinfo{volume}{32}, \bibinfo{pages}{733--744}.
\bibitem[{Minier et~al.(2014)Minier, Chibbaro and Pope}]{minier2014guidelines}
\bibinfo{author}{Minier, J.P.}, \bibinfo{author}{Chibbaro, S.},
  \bibinfo{author}{Pope, S.B.}, \bibinfo{year}{2014}.
\newblock \bibinfo{title}{Guidelines for the formulation of lagrangian
  stochastic models for particle simulations of single-phase and dispersed
  two-phase turbulent flows}.
\newblock \bibinfo{journal}{Physics of Fluids} \bibinfo{volume}{26}.
\bibitem[{Nie et~al.(2017)Nie, Sun, Zhang, Chen, Chen, Chen, Schaefer
  et~al.}]{nie2017vertical}
\bibinfo{author}{Nie, S.}, \bibinfo{author}{Sun, H.}, \bibinfo{author}{Zhang,
  Y.}, \bibinfo{author}{Chen, D.}, \bibinfo{author}{Chen, W.},
  \bibinfo{author}{Chen, L.}, \bibinfo{author}{Schaefer, S.}, et~al.,
  \bibinfo{year}{2017}.
\newblock \bibinfo{title}{Vertical distribution of suspended sediment under
  steady flow: existing theories and fractional derivative model}.
\newblock \bibinfo{journal}{Discrete Dynamics in Nature and Society}
  \bibinfo{volume}{2017}.
\bibitem[{Nikora et~al.(2002)Nikora, Habersack, Huber and
  McEwan}]{nikora2002bed}
\bibinfo{author}{Nikora, V.}, \bibinfo{author}{Habersack, H.},
  \bibinfo{author}{Huber, T.}, \bibinfo{author}{McEwan, I.},
  \bibinfo{year}{2002}.
\newblock \bibinfo{title}{On bed particle diffusion in gravel bed flows under
  weak bed load transport}.
\newblock \bibinfo{journal}{Water Resources Research} \bibinfo{volume}{38},
  \bibinfo{pages}{17--1}.
\bibitem[{Nikuradse(1933)}]{nikuradse1933stromungsgesetze}
\bibinfo{author}{Nikuradse, J.}, \bibinfo{year}{1933}.
\newblock \bibinfo{title}{Str{\"o}mungsgesetze in rauhen rohren,
  forschungsheft}.
\newblock \bibinfo{journal}{Verein Deutsche Ingenieure} .
\bibitem[{Niu et~al.(2016)Niu, Burrage and Chen}]{niu2016modelling}
\bibinfo{author}{Niu, Y.}, \bibinfo{author}{Burrage, K.},
  \bibinfo{author}{Chen, L.}, \bibinfo{year}{2016}.
\newblock \bibinfo{title}{Modelling biochemical reaction systems by stochastic
  differential equations with reflection}.
\newblock \bibinfo{journal}{Journal of Theoretical Biology}
  \bibinfo{volume}{396}, \bibinfo{pages}{90--104}.
\bibitem[{Oh and Tsai(2010)}]{oh2010stochastic}
\bibinfo{author}{Oh, J.}, \bibinfo{author}{Tsai, C.W.}, \bibinfo{year}{2010}.
\newblock \bibinfo{title}{A stochastic jump diffusion particle-tracking model
  (sjd-ptm) for sediment transport in open channel flows}.
\newblock \bibinfo{journal}{Water Resources Research} \bibinfo{volume}{46}.
\bibitem[{Oh and Tsai(2018)}]{oh2018stochastic}
\bibinfo{author}{Oh, J.}, \bibinfo{author}{Tsai, C.W.}, \bibinfo{year}{2018}.
\newblock \bibinfo{title}{A stochastic multivariate framework for modeling
  movement of discrete sediment particles in open channel flows}.
\newblock \bibinfo{journal}{Stochastic Environmental Research and Risk
  Assessment} \bibinfo{volume}{32}, \bibinfo{pages}{385--399}.
\bibitem[{Pal and Ghoshal(2016)}]{pal2016effect}
\bibinfo{author}{Pal, D.}, \bibinfo{author}{Ghoshal, K.}, \bibinfo{year}{2016}.
\newblock \bibinfo{title}{Effect of particle concentration on sediment and
  turbulent diffusion coefficients in open-channel turbulent flow}.
\newblock \bibinfo{journal}{Environmental Earth Sciences} \bibinfo{volume}{75},
  \bibinfo{pages}{1--11}.
\bibitem[{Park and Seo(2018)}]{park2018modeling}
\bibinfo{author}{Park, I.}, \bibinfo{author}{Seo, I.W.}, \bibinfo{year}{2018}.
\newblock \bibinfo{title}{Modeling non-fickian pollutant mixing in open channel
  flows using two-dimensional particle dispersion model}.
\newblock \bibinfo{journal}{Advances in Water Resources} \bibinfo{volume}{111},
  \bibinfo{pages}{105--120}.
\bibitem[{Pettersson(1995)}]{pettersson1995approximations}
\bibinfo{author}{Pettersson, R.}, \bibinfo{year}{1995}.
\newblock \bibinfo{title}{Approximations for stochastic differential equations
  with reflecting convex boundaries}.
\newblock \bibinfo{journal}{Stochastic Processes and Their Applications}
  \bibinfo{volume}{59}, \bibinfo{pages}{295--308}.
\bibitem[{Rijn(1984)}]{rijn1984sediment}
\bibinfo{author}{Rijn, L.C.v.}, \bibinfo{year}{1984}.
\newblock \bibinfo{title}{Sediment transport, part ii: suspended load
  transport}.
\newblock \bibinfo{journal}{Journal of Hydraulic Engineering}
  \bibinfo{volume}{110}, \bibinfo{pages}{1613--1641}.
\bibitem[{Saisho(1987)}]{saisho1987stochastic}
\bibinfo{author}{Saisho, Y.}, \bibinfo{year}{1987}.
\newblock \bibinfo{title}{Stochastic differential equations for
  multi-dimensional domain with reflecting boundary}.
\newblock \bibinfo{journal}{Probability Theory and Related Fields}
  \bibinfo{volume}{74}, \bibinfo{pages}{455--477}.
\bibitem[{Saletti et~al.(2015)Saletti, Molnar, Zimmermann, Hassan and
  Church}]{saletti2015temporal}
\bibinfo{author}{Saletti, M.}, \bibinfo{author}{Molnar, P.},
  \bibinfo{author}{Zimmermann, A.}, \bibinfo{author}{Hassan, M.A.},
  \bibinfo{author}{Church, M.}, \bibinfo{year}{2015}.
\newblock \bibinfo{title}{Temporal variability and memory in sediment transport
  in an experimental step-pool channel}.
\newblock \bibinfo{journal}{Water Resources Research} \bibinfo{volume}{51},
  \bibinfo{pages}{9325--9337}.
\bibitem[{Schumer et~al.(2009)Schumer, Meerschaert and
  Baeumer}]{schumer2009fractional}
\bibinfo{author}{Schumer, R.}, \bibinfo{author}{Meerschaert, M.M.},
  \bibinfo{author}{Baeumer, B.}, \bibinfo{year}{2009}.
\newblock \bibinfo{title}{Fractional advection-dispersion equations for
  modeling transport at the earth surface}.
\newblock \bibinfo{journal}{Journal of Geophysical Research: Earth Surface}
  \bibinfo{volume}{114}.
\bibitem[{Singer et~al.(2008)Singer, Schuss, Osipov and
  Holcman}]{singer2008partially}
\bibinfo{author}{Singer, A.}, \bibinfo{author}{Schuss, Z.},
  \bibinfo{author}{Osipov, A.}, \bibinfo{author}{Holcman, D.},
  \bibinfo{year}{2008}.
\newblock \bibinfo{title}{Partially reflected diffusion}.
\newblock \bibinfo{journal}{SIAM Journal on Applied Mathematics}
  \bibinfo{volume}{68}, \bibinfo{pages}{844--868}.
\bibitem[{Skeel and Berzins(1990)}]{skeel1990method}
\bibinfo{author}{Skeel, R.D.}, \bibinfo{author}{Berzins, M.},
  \bibinfo{year}{1990}.
\newblock \bibinfo{title}{A method for the spatial discretization of parabolic
  equations in one space variable}.
\newblock \bibinfo{journal}{SIAM Journal on Scientific and Statistical
  Computing} \bibinfo{volume}{11}, \bibinfo{pages}{1--32}.
\bibitem[{Skorokhod(1961)}]{skorokhod1961stochastic}
\bibinfo{author}{Skorokhod, A.V.}, \bibinfo{year}{1961}.
\newblock \bibinfo{title}{Stochastic equations for diffusion processes in a
  bounded region}.
\newblock \bibinfo{journal}{Theory of Probability \& Its Applications}
  \bibinfo{volume}{6}, \bibinfo{pages}{264--274}.
\bibitem[{Skorokhod(1962)}]{skorokhod1962stochastic2}
\bibinfo{author}{Skorokhod, A.V.}, \bibinfo{year}{1962}.
\newblock \bibinfo{title}{Stochastic equations for diffusion processes in a
  bounded region. ii}.
\newblock \bibinfo{journal}{Theory of Probability \& Its Applications}
  \bibinfo{volume}{7}, \bibinfo{pages}{3--23}.
\bibitem[{S{\l}omi{\'n}ski(1994)}]{slominski1994approximation}
\bibinfo{author}{S{\l}omi{\'n}ski, L.}, \bibinfo{year}{1994}.
\newblock \bibinfo{title}{On approximation of solutions of multidimensional
  sde's with reflecting boundary conditions}.
\newblock \bibinfo{journal}{Stochastic Processes and Their Applications}
  \bibinfo{volume}{50}, \bibinfo{pages}{197--219}.
\bibitem[{S{\l}omi{\'n}ski(1995)}]{slominski1995some}
\bibinfo{author}{S{\l}omi{\'n}ski, L.}, \bibinfo{year}{1995}.
\newblock \bibinfo{title}{Some remarks on approximation of solutions of sde's
  with reflecting boundary conditions}.
\newblock \bibinfo{journal}{Mathematics and Computers in Simulation}
  \bibinfo{volume}{38}, \bibinfo{pages}{109--117}.
\bibitem[{Socolofsky and Jirka(2005)}]{socolofsky2005special}
\bibinfo{author}{Socolofsky, S.}, \bibinfo{author}{Jirka, G.},
  \bibinfo{year}{2005}.
\newblock \bibinfo{title}{Special topics in mixing and transport processes in
  the environment: engineering: lectures}.
\newblock \bibinfo{journal}{Texas A and M University: College Station, Texas} .
\bibitem[{Sumer(1986)}]{sumer1986recent}
\bibinfo{author}{Sumer, B.M.}, \bibinfo{year}{1986}.
\newblock \bibinfo{title}{Recent developments on the mechanics of sediment
  suspension}, in: \bibinfo{booktitle}{Proc. Euromech}, pp.
  \bibinfo{pages}{3--13}.
\bibitem[{Sumer et~al.(1996)Sumer, Kozakiewicz, Freds{\o}e and
  Deigaard}]{sumer1996velocity}
\bibinfo{author}{Sumer, B.M.}, \bibinfo{author}{Kozakiewicz, A.},
  \bibinfo{author}{Freds{\o}e, J.}, \bibinfo{author}{Deigaard, R.},
  \bibinfo{year}{1996}.
\newblock \bibinfo{title}{Velocity and concentration profiles in sheet-flow
  layer of movable bed}.
\newblock \bibinfo{journal}{Journal of Hydraulic Engineering}
  \bibinfo{volume}{122}, \bibinfo{pages}{549--558}.
\bibitem[{Tanaka(1979)}]{tanaka1979stochastic}
\bibinfo{author}{Tanaka, H.}, \bibinfo{year}{1979}.
\newblock \bibinfo{title}{Stochastic differential equations with reflecting
  boundary condition in convex regions}.
\newblock \bibinfo{journal}{Hiroshima Math. J} \bibinfo{volume}{9},
  \bibinfo{pages}{163--177}.
\bibitem[{Tsai and Huang(2019)}]{tsai2019modeling}
\bibinfo{author}{Tsai, C.W.}, \bibinfo{author}{Huang, S.H.},
  \bibinfo{year}{2019}.
\newblock \bibinfo{title}{Modeling suspended sediment transport under influence
  of turbulence ejection and sweep events}.
\newblock \bibinfo{journal}{Water Resources Research} \bibinfo{volume}{55},
  \bibinfo{pages}{5379--5393}.
\bibitem[{Tsai et~al.(2021)Tsai, Huang and Hung}]{tsai2021incorporating}
\bibinfo{author}{Tsai, C.W.}, \bibinfo{author}{Huang, S.H.},
  \bibinfo{author}{Hung, S.Y.}, \bibinfo{year}{2021}.
\newblock \bibinfo{title}{Incorporating the memory effect of turbulence
  structures into suspended sediment transport modeling}.
\newblock \bibinfo{journal}{Water Resources Research} \bibinfo{volume}{57},
  \bibinfo{pages}{e2020WR028475}.
\bibitem[{Tsai et~al.(2020)Tsai, Hung and Wu}]{tsai2020stochastic}
\bibinfo{author}{Tsai, C.W.}, \bibinfo{author}{Hung, S.Y.},
  \bibinfo{author}{Wu, T.H.}, \bibinfo{year}{2020}.
\newblock \bibinfo{title}{Stochastic sediment transport: anomalous diffusions
  and random movement}.
\newblock \bibinfo{journal}{Stochastic Environmental Research and Risk
  Assessment} \bibinfo{volume}{34}, \bibinfo{pages}{397--413}.
\bibitem[{Tsai et~al.(2016)Tsai, Lin and Hung}]{tsai2016incorporating}
\bibinfo{author}{Tsai, C.W.}, \bibinfo{author}{Lin, E.Y.},
  \bibinfo{author}{Hung, S.Y.}, \bibinfo{year}{2016}.
\newblock \bibinfo{title}{Incorporating a trend analysis of large flow
  perturbations into stochastic modeling of particle transport in open channel
  flow}.
\newblock \bibinfo{journal}{Journal of Hydrology} \bibinfo{volume}{541},
  \bibinfo{pages}{689--702}.
\bibitem[{Viggiano et~al.(2020)Viggiano, Friedrich, Volk, Bourgoin, Cal and
  Chevillard}]{viggiano2020modelling}
\bibinfo{author}{Viggiano, B.}, \bibinfo{author}{Friedrich, J.},
  \bibinfo{author}{Volk, R.}, \bibinfo{author}{Bourgoin, M.},
  \bibinfo{author}{Cal, R.B.}, \bibinfo{author}{Chevillard, L.},
  \bibinfo{year}{2020}.
\newblock \bibinfo{title}{Modelling lagrangian velocity and acceleration in
  turbulent flows as infinitely differentiable stochastic processes}.
\newblock \bibinfo{journal}{Journal of Fluid Mechanics} \bibinfo{volume}{900},
  \bibinfo{pages}{A27}.
\bibitem[{Wu et~al.(2022)Wu, Tsai and Wu}]{wu2022probabilistic}
\bibinfo{author}{Wu, K.T.}, \bibinfo{author}{Tsai, C.W.}, \bibinfo{author}{Wu,
  M.J.}, \bibinfo{year}{2022}.
\newblock \bibinfo{title}{Probabilistic characterization of sweep and ejection
  events in turbulent flows and its implications on sediment transport}.
\newblock \bibinfo{journal}{Water Resources Research} \bibinfo{volume}{58},
  \bibinfo{pages}{e2021WR030417}.
\bibitem[{Xie(1981)}]{xie1981river}
\bibinfo{author}{Xie, J.}, \bibinfo{year}{1981}.
\newblock \bibinfo{title}{River sediment engineering}.
\newblock \bibinfo{publisher}{Water Resources Press, Beijing (in Chinese)}.

\end{thebibliography}





\end{document}